\documentclass[11pt]{article}
\usepackage{fullpage}

\usepackage[utf8]{inputenc}
\usepackage{setspace}
\usepackage[margin=1.25in]{geometry}
\usepackage{graphicx}
\graphicspath{ {./figures/} }
\usepackage{subcaption}
\usepackage{amsmath}
\usepackage{lineno}
\usepackage{amsthm}
\usepackage{color,soul}
\usepackage{comment}
\usepackage{hyperref}
\usepackage{nicematrix}

\usepackage{cleveref}

\usepackage[section]{placeins}

\newcommand{\ignore}[1]{}

\usepackage{complexity}
\usepackage{tikz}
\usepackage{gensymb}
\usetikzlibrary{shapes.geometric}
\usetikzlibrary{shapes}
\usepackage{concmath,ccfonts,beton,euler}

\usetikzlibrary{positioning}
\usetikzlibrary{decorations.text}
\tikzset{
    ncbar angle/.initial=90,
    ncbar/.style={
        to path=(\tikztostart)
        -- ($(\tikztostart)!#1!\pgfkeysvalueof{/tikz/ncbar angle}:(\tikztotarget)$)
        -- ($(\tikztotarget)!($(\tikztostart)!#1!\pgfkeysvalueof{/tikz/ncbar angle}:(\tikztotarget)$)!\pgfkeysvalueof{/tikz/ncbar angle}:(\tikztostart)$)
        -- (\tikztotarget)
    },
    ncbar/.default=0.5cm,
}
\tikzset{round left paren/.style={ncbar=0.5cm,out=100,in=-100}}
\tikzset{round right paren/.style={ncbar=0.5cm,out=80,in=-80}}

\usepackage{amsmath}
\usepackage{amssymb}
\usepackage{array}

\newtheorem{theorem}{Theorem}[section]
\newtheorem{lemma}[theorem]{Lemma}
\newtheorem{corollary}[theorem]{Corollary}

\newtheorem{definition}[theorem]{Definition}

\usepackage{subcaption}
\usepackage{autonum}
 
\newcommand{\set}[2]{\{\,#1\mid#2\,\}}

\newcommand{\kltight}{$(k,l)$-tight}

\newcommand{\klspars}{$(k,l)$-sparse}

\newcommand{\ffact}{$f$-factor}

\DeclareMathOperator{\Logspace}{\mathsf{Logspace}}

\pgfdeclarelayer{background}
\pgfsetlayers{background,main}
\def\drawpolygon#1,#2;{
    \begin{pgfonlayer}{background}
        \filldraw[line width=20,join=round      ](#1.center)foreach\A in{#2}{--(\A.center)}--cycle;
        \filldraw[line width=19,join=round,white](#1.center)foreach\A in{#2}{--(\A.center)}--cycle;
    \end{pgfonlayer}
}

\title{Planarizing Gadgets for $(k,l)$-tight Graphs Do Not Exist\thanks{Partially supported by DFG grant TH 472/5-2
and SERB MATRICS grant MTR2022/001009
}} 

\author{Archit Chauhan\thanks{IIT Bombay} \and Rohit Gurjar\thanks{IIT Bombay} \and Kilian Rothmund\thanks{Ulm University, Aalen University} \and Thomas Thierauf\thanks{Ulm University}}

\begin{document}
\maketitle

\setstcolor{red} 

\begin{abstract}
The problem of \emph{recognizing $(k,l)$-tight graphs} is a fundamental problem that has close connections to well studied problems
like graph rigidity. The problem is better understood for planar graphs as compared to general graphs. For example, deterministic 
\textsf{NC}-algorithms for the problem are known for planar graphs, but no such algorithm is known for general graphs.
A common approach to reduce a graph problem to the planar case is to use \emph{planarizing gadgets}.
Our main contribution is to show that, unconditionally, planarizing gadgets for the problem of \emph{recognizing $(k,l)$-tight graphs} do not exist.
\end{abstract}

\section{Introduction}
Let $G = (V,E)$ be a (multi-) graph with $|V| = n$ vertices and $|E| = m$ edges.
Graph~$G$ is \emph{$(k,l)$-sparse},
for integers $k,l$, 
if every subset $S \subseteq V$,
where $|S| \geq 2$,
induces a subgraph in~$G$ with at most $k|S| - l$
many edges.
A graph~$G$ is \emph{$(k,l)$-tight}, if~$G$ is $(k,l)$-sparse and 
$m = kn-l$.
Graph~$G$ is $(k,l)$-\emph{spanning}, if it contains a $(k,l)$-tight subgraph that spans the entire vertex set.

The class of $(k,l)$-sparse/tight/spanning graphs captures several interesting graph classes. 
We give some examples.
\begin{itemize}
\item 
The $(2,3)$-tight graphs are the \emph{minimally rigid graphs} 
in~$2D$, also called \emph{Laman graphs}.
The $(2,3)$-spanning graphs are the \emph{rigid graphs} in~$2D$~\cite{Laman1970, PollaczekGeiringer1927}.
\item
The $(k,k)$-tight graphs are the graphs where the edges
can be partitioned into~$k$ edge-disjoint spanning 
trees.
In case of $(k,k)$-sparse graphs,
the edges can be partitioned into~$k$ edge-disjoint 
forests~\cite{NashWilliams1961, Tutte1961}.
\item 
For $k \leq l < 2k$, the $(k,l)$-sparse graphs 
are precisely the $(k,l-k)$-arborescences~\cite{haas02}.
A $(k,a)$-\emph{arborescence} is a graph where adding any $a$ edges results in a graph whose edge set is the union of $k$ edge-disjoint spanning trees.
\end{itemize}

A $(k,l)$-spanning graph $G = (V,E)$ defines a \emph{matroid}
for $l < 2k$ and large enough~$n$~\cite{LEE2008}:
The ground set is~$E$.
A set~$I \subseteq E$ is independent,  if $(V,I)$ is $(k,l)$-sparse.
The base sets are the $(k,l)$-tight subsets.

\ignore{
Another aspect is that for integers~$k$ and $l\leq 2k-1$, the collection of all $(k,l)$-tight graphs (with sufficiently large number of vertices) is the set of bases of a matroid where the ground set are the egdes of the complete multi-graph with loop multiplicity $k-l$ and edge multiplicity $2k-l$~\cite{LEE2008}.
}

For rational parameters $k,l$, there are also applications in discrete geometry~\cite{Whiteley1996}.


\paragraph{Complexity.}
Whether a graph $G = (V,E)$ is
sparse, tight, or spanning for parameters $(k,l)$
can be decided in time $\mathcal{O}(n^2)$~\cite{LEE2008},
for $|V| = n$ and $l\leq 2k-1$.

With respect to the \emph{parallel complexity},
it is open whether $(k,l)$-sparse, tight, or spanning graphs can be recognized  in~$\NC$, or even~$\Logspace$. 
Note that the problem of deciding if a graph is \klspars{} is slightly more general than deciding if a graph is \kltight
\footnote{To see this, observe that if we are given an oracle to decide $(k,l)$-sparsity, then for any graph, we can decide if it is $(k,l)$-tight by just 
checking if $m=kn-l$. However if we have a graph where $m\neq kn-l$, then even with an oracle for deciding $(k,l)$-tightness there is no 
obvious efficient way to rule out $(k,l)$-sparsity.}.
Hendrickson observed that the
decision problem for $(2,3)$-sparse graphs reduces to 
\emph{bipartite perfect matching}~\cite{Hendrickson1992},
and this reduction works similar for $(k,l)$-sparse graphs.
Since bipartite perfect matching is in 
$\RNC$~\cite{Mulmuley1987bipMatchingRNC}
and in quasi-$\NC$~\cite{Fenner2016},
we get the same bounds, $\RNC$ and quasi-$\NC$, 
for deciding $(k,l)$-sparsity.

For \emph{planar graphs}, we get $\NC$-algorithms in some cases.
For example, whether a planar graph~$G$ is $(2,3)$-tight
can be decided in~$\NC$,
in fact,
even when~$G$ is only \emph{one-crossing-minor-free}~\cite{grt25}.

Hendrickson's reduction from $(k,l)$-sparsity to bipartite perfect matching does not preserve planarity.
However, there is a reduction from $(k,l)$-sparsity to $s$-$t$ \emph{max flow} (see~\cite{HLS97,HLS98}), that can be tweaked to reduce $(k,l)$-sparsity to \emph{max flow with multiple sources and sinks} such that  planarity is preserved.
For completeness, we give this reduction in the appendix Section~\ref{app:klsparse_to_flow}.
Sankowski~\cite{PS18} reduces max flow to \emph{weighted perfect matching} which can be solved in $\NC$ when the input graph is planar or has bounded genus~\cite{Anari2020}.
Sankowski's reduction preserves planarity and also the genus of the graph. Therefore, we can decide $(k,l)$-sparsity in $\NC$ for planar graphs and also for bounded-genus graphs.
Moroever, max flow is in $\NC$ when the input graph has bounded treewidth~\cite{HAGERUP1998366}.
Hence, $(k,l)$-sparsity is also in $\NC$ for bounded-treewidth graphs.
We describe this in more detail in Appendix~\ref{app:klsparse_to_flow}.

For recognizing planar $(k,l)$-tight graphs, there is also an alternative $\NC$ algorithm. For tightness one can modify the flow reduction by fixing \emph{demands} at the source and sink vertices.
Miller and Naor~\cite{Miller1995} solve this more restricted flow problem in~$\NC$ when the input graph is planar.

Note that it is open whether $(k,l)$-spanning graphs can be recognized in~$\NC$, even for planar graphs.

\paragraph{Our contribution.}
By the results for planar graphs mentioned above,
an obvious idea to bring the decision problem for
$(k,l)$-tight graphs down to~$\NC$ is
to reduce it to the planar case.
A common approach to such a reduction is a \emph{planarizing gadget} that locally replaces edge crossings from a given drawing of a graph while preserving a desired property, like $(k,l)$-sparsity.
Planarazing gadgets have been used to show that classical $\NP$-hard graph problems, like $3$-colorability and vertex cover, are still $\NP$-hard in the planar case~\cite{GAREY1976237}. 
Datta, Kulkarni, Limaye and Mahajan used planarizing gadgets in the context of computing the determinant of an adjacency matrix~\cite{datta07}.
We show unconditionally,
that such planarizing gadgets do not exist for $(k,l)$-tight graphs.

\begin{theorem}\label{thm:tightGadget}
Planarizing gadgets do not exist for
$(k,l)$-tight graphs,
for $k \geq 2$ and $0 \leq l \leq 2k - 1$.
\end{theorem}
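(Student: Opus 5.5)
The plan is to argue by contradiction. Suppose a planar gadget $H$ exists: a planar graph with four terminals $a,b,c,d$ in this cyclic order on its outer face and some set $I$ of internal vertices. Replacing a crossing of edges $ac$ and $bd$ by $H$ would then preserve $(k,l)$-tightness in both directions, for every host graph $G$.

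First I would extract numerical constraints from simple host graphs. Comparing edge counts, $G'$ has $n+|I|$ vertices and $m-2+|E(H)|$ edges. Applying the gadget to some $(k,l)$-tight $G$ that contains the crossing therefore forces $|E(H)|=k|I|+2$. Next I would describe how $H$ interacts with the rest of the graph through the function
\[
f(T)=\max_{U\subseteq I}\bigl(e_H(T\cup U)-k|U|\bigr)\qquad\text{for } T\subseteq\{a,b,c,d\}.
\]
Here $e_H$ counts the edges of $H$ induced by a vertex set. By choosing host graphs $G$ that are $(k,l)$-tight but become tight with no slack around specified terminal sets, I would show that $f$ must agree with the corresponding function for the two edges $ac,bd$. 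This gives $f(\{a,c\})=f(\{b,d\})=1$, $f(\{a,b,c,d\})=2$, and $f(T)=0$ for the other pairs, such as $\{a,b\}$. For example, if $f(\{a,b\})\ge 1$, I would plant a host graph in which $a,b$ together with some external vertices form a tight set; the extra edge in $G'$ would then violate sparsity. Conversely, if $f(\{a,c\})=0$, a tight $G$ whose sparsity is witnessed by a tight set containing $a,c$ would map to a $G'$ with too few edges in one region, hence too many elsewhere, again violating sparsity. Such host graphs exist for $k\ge 2$ and $0\le l\le 2k-1$ because multigraphs are allowed, so I can add parallel edges to adjust the counts.

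Next I would use planarity. Let $X=\{a,c\}\cup U_1$ and $Y=\{b,d\}\cup U_2$ be maximizers for $f(\{a,c\})$ and $f(\{b,d\})$, chosen inclusion-minimal. The edge count $e$ is supermodular:
\[
e(X\cup Y)+e(X\cap Y)\ \ge\ e(X)+e(Y).
\]
The key claim is that a minimal maximizer $X$ induces a connected subgraph joining $a$ to $c$; otherwise we could split it and lose nothing, or it would witness $f(\{a\})$ or $f(\{c\})$ being positive. By the Jordan curve theorem, with terminals in cyclic order $a,b,c,d$ on the outer face, an $a$–$c$ path and a $b$–$d$ path in $H$ must share a vertex. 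So $X\cap Y$ is a nonempty set of internal vertices, and the gadget's sparsity (taken inside a suitable $G'$) bounds it:
\[
e(X\cap Y)\le k|X\cap Y|-l \ \text{ if } |X\cap Y|\ge 2, \qquad e(X\cap Y)=0 \ \text{ if it is a single vertex}.
\]
Plugging this into the supermodularity inequality gives
\[
e(X\cup Y)-k|U_1\cup U_2|\ \ge\ 2+k|X\cap Y|-e(X\cap Y)\ \ge\ 2+\min(k,l)\ >\ 2,
\]
using $k\ge 2$. This contradicts $f(\{a,b,c,d\})=2$. I expect this contradiction to survive the corner cases $l=0$ and $l=2k-1$ with minor adjustments, for instance using $k|X\cap Y|\ge k\ge 2$ when $l$ is small.

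I expect the main obstacle to be the first step: rigorously constructing host graphs that force the exact values of $f$. The difficulty is realizing tight sets through chosen terminal subsets while keeping $G$ itself $(k,l)$-tight and drawable with that single crossing. A secondary difficulty is justifying the connectivity of minimal maximizers, which I would try to derive from supermodularity applied to the components of $X$.
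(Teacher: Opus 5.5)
For $l\ge 1$ your Jordan-curve and supermodularity steps are essentially the paper's Cases~1--3, and you do not need the paper's preliminary reduction to $k=2$. At $l=0$, however, the argument has a genuine hole. Your chain ends with $2+k|X\cap Y|-e(X\cap Y)\ge 2+\min(k,l)$. For $l=0$ and $|X\cap Y|\ge 2$ this only gives $\ge 2$, because nothing stops $X\cap Y$ from being tight itself, $e(X\cap Y)=k|X\cap Y|$. In that case there is no contradiction with $f(\{a,b,c,d\})=2$. Your suggested fix via $k|X\cap Y|\ge k$ does not help, since only the difference $k|X\cap Y|-e(X\cap Y)$ enters.

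This is exactly the case the paper treats separately, and it needs a further idea. Put $Z=X\cap Y$. Equality throughout forces $e(Z)=k|Z|$ and no edges between $X\setminus Y$ and $Y\setminus X$. Now use planarity a second time: the $b$--$d$ path in $H[Y]$ splits $X\setminus Y$ into $X_a\ni a$ and $X_c\ni c$ with no edges between them, and similarly $Y\setminus X=Y_b\cup Y_d$. Write $g(W)=e(W)-k|W\cap I|$. The mixed-pair values you already claimed give $g(X_a\cup Z)+g(Y_b\cup Z)=g(X_a\cup Z\cup Y_b)\le f(\{a,b\})=0$, and likewise $g(X_c\cup Z)+g(Y_d\cup Z)\le f(\{c,d\})=0$. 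Since $g(X)=g(X_a\cup Z)+g(X_c\cup Z)$ and $g(Y)=g(Y_b\cup Z)+g(Y_d\cup Z)$, adding yields $g(X)+g(Y)\le 0$, contradicting $g(X)=g(Y)=1$. The paper does the same thing with its $z_x$ bookkeeping and Lemma~\ref{lem:sparse2}. So the values $f=0$ on mixed pairs are the key ingredient for $l=0$, but your proposal never uses them.

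Second, your sketch for the lower bound $f(\{a,c\})\ge 1$ cannot work as stated, because it only uses tight host graphs. Take the gadget consisting of one new vertex $x$ joined to $a,b,c,d$. It maps every $(2,3)$-tight graph to a $(2,3)$-tight graph: for $x\in S$ one needs $t-j\le 2$, where $t$ counts terminals in $S$ and $j$ counts crossing edges inside $S$, and any three terminals contain a crossing edge. Yet this gadget has $f(\{a,c\})=0$. So the bound must come from the converse direction, that a non-tight $G$ gives a non-tight $G'$. Take a host $G$ with the right edge count whose only violation is a set containing $ac$, with $G-ac$ sparse. Then $G'$ has a violating set $T$, and you must show $T$ contains $a,c$ and avoids $b,d$. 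Only then can you subtract the sparse outside part to get $e(T\cap V_H)\ge k|T\cap I|+1$. The paper localizes $T$ by exhibiting two tight graphs, each obtained from $G$ by moving a single edge. This forces the endpoints of each moved-away edge into $T$ and the new endpoint out of it. Some device of this kind is missing from your plan.
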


The result from Theorem~\ref{thm:tightGadget} holds analogously for \emph{bipartite perfect matching}~\cite{matchingGadget}.
The question whether bipartite perfect matching is 
in $\NC$ is still open and much work has been done to consider
different variations of the problem in restricted graph classes or other complexity classes closely related to  
$\NC$~\cite{Fenner2016, Mulmuley1987bipMatchingRNC, raghav06, Datta2009, Anari2020, Mahajan2000, ola2017, AM25, PS18}.
As discussed above, recognizing $(k,l)$-tight graphs reduces to bipartite perfect matching. A reduction in the other direction is not known.
Hence, $(k,l)$-tightness might be easier to solve.
Though this problem reduces to bipartite perfect matching, not all results from bipartite perfect matching carry over to $(k,l)$-tight graphs directly.
In particular, the unconditional non-existence result for planarizing gadgets for bipartite perfect matching by Gurjar, Messner, Straub and Thierauf~\cite{matchingGadget} does not seem to carry over to $(k,l)$-tight graphs and new ideas are required to solve the problem.

Also in terms of sequential time complexity our result might be interesting: For recognizing $(2,3)$-tight graphs, the best known sequential algorithm has a running time of $\mathcal{O}(n\sqrt{n \log n} + m)$~\cite{Gabow1992}. Planar $(2,3)$-tight graphs (a.k.a. \emph{planar Laman graphs}) have a geometric characterization via \emph{pointed pseudo-triangulations}~\cite{streinu2000, haas2005}.
Using this geometric characterization, planar $(2,3)$-tight graphs can be recognized in $\mathcal{O}(n \log^3 n)$, thereby improving the time complexity for the general case~\cite{rss}. 
Our non-existence result for planarizing gadgets poses an obstacle in improving the sequential running time for the general case by reducing it to the planar case.


\section{Preliminaries}\label{s:prel}
Let $G = (V,E)$ be a graph.
For a set of vertices $S \subseteq V$, we denote the subgraph of~$G$ induced by~$S$ with $G[S]$. 
We denote edges in the graph~$G[S]$ by~$E[S]$, i.e.
\[
E[S] = \set{(u,v) \in E}{u,v \in S}.
\]
\ignore{For two subsets $X,Y \subseteq V$ we denote the edges in~$G$ between $X,Y$ by $E_G(X,Y)$, i.e.
\[
E_G(X,Y) = \set{(u,v) \in E}{u \in X, v \in Y}.
\]}
If the edge set of a graph~$G$ is not explicitly given a name,
we denote it by~$E_G$.

We use integers~$k$ and~$l$ for sparsity and tightness
conditions of graphs throughout the paper.

\subsection{Planarizing gadgets} 
Roughly, a \emph{planarizing gadget} is a planar graph of constant size which is used to replace edge crossings in the drawing of any graph $G$ like in Figure~\ref{fig:gadget}, while preserving a desired property.
More formally, let $\Gamma$ be a graph of constant size that has a planar embedding with at least four vertices $a,b,c,d$ that appear on the outer face in the order $a,c,b,d$.
Let $G$ be a graph with a drawing in the plane 
where edges $ab, cd$ are crossing.
When we replace this crossing in the drawing of $G$ by $\Gamma$ as in Figure~\ref{fig:gadget}, we get a drawing of $G' = G - \{ab, cd\} + \Gamma$ where the number of crossings is decreased by one.
The idea is to replace each crossing in $G$ by a copy of $\Gamma$ to make the graph planar.
Let $L$ be a class of graphs.
Then, the graph~$\Gamma$ is called a \emph{planarizing gadget for $L$} if we have
\[
G \in L \Leftrightarrow G' \in L.
\]
\begin{figure}[bth]
\centering
    \usetikzlibrary{positioning, fit, shapes.geometric}

\tikzstyle{every node}=[circle, draw, fill=black,inner sep=0pt, minimum width=4pt,
node distance =1 cm and 1cm ] 
\begin{tikzpicture}[scale=1.5]
    \node (a) at (0,0) [label=left:$a$]{}; 
    \node (c) at (1,0) [label=right:$c$]{};
    \node (d) at (0,-1) [label=left:$d$]{};
    \node (b) at (1,-1) [label=right:$b$]{};

	\draw[] (a) -- (b);
	\draw[] (c) -- (d);
    \draw[->] (2,-0.5) -- (2.5,-0.5);

    \node (a1) at (4,0) [label=left:$a$]{};
    \node (c1) at (5,0) [label=right:$c$]{};
    \node (d1) at (4,-1) [label=left:$d$]{};
    \node (b1) at (5,-1) [label=right:$b$]{};

    \node (in1) at (4.2,-0.3) [fill=white, draw=none]{};
    \node (in2) at (4.25,-0.25) [fill=white, draw=none]{};
    \node (in3) at (4.3,-0.2) [fill=white, draw=none]{};
    \draw (a1) -- (in1);
    \draw (a1) -- (in2);
    \draw (a1) -- (in3);

    \node (in4) at (4.8,-0.3) [fill=white, draw=none]{};
    \node (in5) at (4.75,-0.25) [fill=white, draw=none]{};
    \node (in6) at (4.7,-0.2) [fill=white, draw=none]{};
    \draw (c1) -- (in4);
    \draw (c1) -- (in5);
    \draw (c1) -- (in6);

    \node (in7) at (4.2,-0.7) [fill=white, draw=none]{};
    \node (in8) at (4.25,-0.75) [fill=white, draw=none]{};
    \node (in9) at (4.3,-0.8) [fill=white, draw=none]{};
    \draw (d1) -- (in7);
    \draw (d1) -- (in8);
    \draw (d1) -- (in9);

    \node (in10) at (4.8,-0.7) [fill=white, draw=none]{};
    \node (in11) at (4.75,-0.75) [fill=white, draw=none]{};
    \node (in12) at (4.7,-0.8) [fill=white, draw=none]{};
    \draw (b1) -- (in10);
    \draw (b1) -- (in11);
    \draw (b1) -- (in12);

    \draw[fill=lightgray, draw=none] (4.5,-0.5) circle (12pt);

 \end{tikzpicture}
        \caption{The crossing $ab, cd$ is resolved by adding a planarizing gadget}
        \label{fig:gadget}
\end{figure}
Note that there might be multiple crossings in a drawing of $G$. In particular, one edge could cross with multiple edges.
We replace each crossing by a copy of $\Gamma$.
When there are multiple crossings, one might need a more restricted gadget like the one in Figure~\ref{fig:multCrossingGadget}.
However, when we show that a planarizing gadget does not exist for the case that we only consider one edge crossing in the drawing, more restricted gadgets for multiple crossings cannot exist as well. 
Throughout the paper we use $G'$ to denote the graph $G - \{ab, cd\} + \Gamma$, i.e. the graph obtained by replacing the crossing edges $\{ab,cd\}$ with the gadget.
\begin{figure}[bth]
\centering
    \usetikzlibrary{positioning, fit, shapes.geometric}

\tikzstyle{every node}=[circle, draw, fill=black,inner sep=0pt, minimum width=4pt,
node distance =1 cm and 1cm ] 
\begin{tikzpicture}[scale=1.5]
    \node (a) at (0,0) [label=left:$a$]{}; 
    \node (c) at (1,0) [label=right:$c$]{};
    \node (d) at (0,-1) [label=left:$d$]{};
    \node (b) at (1,-1) [label=right:$b$]{};

	\draw[] (a) -- (b);
	\draw[] (c) -- (d);
    \draw[->] (2,-0.5) -- (2.5,-0.5);

    \node (a1) at (4,0) [label=left:$a$]{};
    \node (c1) at (5,0) [label=right:$c$]{};
    \node (d1) at (4,-1) [label=left:$d$]{};
    \node (b1) at (5,-1) [label=right:$b$]{};

    \draw[fill=lightgray, draw=none] (4.25,-0.75) rectangle (4.75,-0.25);
    
    \node (a11) at (4.25,-0.25) []{};
    \node (c11) at (4.75,-0.25) []{};
    \node (d11) at (4.25,-0.75) []{};
    \node (b11) at (4.75,-0.75) []{};

	\draw[] (a1) -- (a11);
    \draw[] (b1) -- (b11);
    \draw[] (c1) -- (c11);
    \draw[] (d1) -- (d11);

 \end{tikzpicture}
        \caption{The crossing $ab, cd$ is resolved by adding a more restricted gadget that also works for multiple crossings.}
        \label{fig:multCrossingGadget}
\end{figure}

\subsection{Connected components in sparse graphs}
For the proof of our main result we need the fact that for some sparse graphs, induced subgraphs with sufficiently many edges are always connected. 
\begin{lemma}\label{lem:con}
Let $G = (V,E)$ be a $(k,l)$-sparse graph, where 
$1 \leq k \leq l \leq 2k - 1$.
Then any subset $S \subseteq V$ with
\begin{equation}
|E[S]| \geq k|S| - 2k + 1 
\end{equation}
induces a connected subgraph in $G$.
\end{lemma}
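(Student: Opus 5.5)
We argue by contradiction. Suppose $S \subseteq V$ satisfies $|E[S]| \geq k|S| - 2k + 1$, but $G[S]$ is disconnected. Then we can split $S = S_1 \cup S_2$ into two nonempty disjoint parts with no edges of $G$ between them, for instance by taking $S_1$ to be one connected component of $G[S]$ and $S_2$ the union of the others. Hence
\[
|E[S]| = |E[S_1]| + |E[S_2]|,
\]
and it suffices to show that $|E[S_i]| \leq k|S_i| - k$ for each $i$. Summing these two bounds gives $|E[S]| \leq k|S| - 2k$, which contradicts the hypothesis.

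The bound $|E[S_i]| \leq k|S_i| - k$ follows from sparsity in two cases. If $|S_i| \geq 2$, then $(k,l)$-sparsity gives $|E[S_i]| \leq k|S_i| - l \leq k|S_i| - k$, using $l \geq k$. If $|S_i| = 1$, sparsity says nothing directly, since it only constrains sets of size at least two. What we need here is that a single vertex spans no edges, i.e.\ there are no loops, so that $|E[S_i]| = 0 = k\cdot 1 - k$.

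This singleton case is the one delicate step, because $G$ is allowed to be a multigraph. I would handle it as follows. If $G$ contains no loops, the claim is immediate. Otherwise, suppose a vertex $v \in S_i$ carries a loop. Pick any other vertex $u$ and apply sparsity to $\{u,v\}$: this allows at most $2k - l \leq k$ edges. That bound alone does not exclude loops. So I would invoke the standard convention for $(k,l)$-graphs with $l \geq k$: the loop multiplicity is at most $k - l \leq 0$, i.e.\ loops are not allowed. This is the convention of \cite{LEE2008} (for $l\ge k$ the complete multigraph has no loops), and it is equivalent to extending the sparsity condition to singletons, which gives at most $k - l \leq 0$ edges on a single vertex. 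I would state this assumption explicitly, since without it the lemma would fail.

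With both cases settled, the two bounds combine to $|E[S]| \leq k|S| - 2k < k|S| - 2k + 1$, contradicting the hypothesis. Therefore $G[S]$ is connected. The upper bound $l \leq 2k-1$ is not used in this argument. Only $l \geq k$ is needed, and it is used twice: once in the step $k|S_i| - l \leq k|S_i| - k$, and once to rule out loops.
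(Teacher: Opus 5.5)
Your proof is correct and is the same argument as the paper's: split $S$ into two parts with no edges between them, bound each part by sparsity, and use $l \geq k$ to reach $|E[S]| \leq k|S| - 2k$. Your version is more careful than the paper's, which applies the bound $k|C_i| - l$ even to singleton components (false when $l > k$), whereas your bound $k|S_i| - k$ holds for loop-free singletons. One small correction: the no-loop convention is not literally ``sparsity extended to singletons'', since for $l > k$ that would require $|E[\{v\}]| \leq k - l < 0$ and so exclude every graph.
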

\begin{proof}
Assume $S$ would induce a subgraph in $G$ 
consisting of two disconnected components $C_1,C_2 \subseteq V$.
Since~$G$ is $(k,l)$-tight we have
\begin{align}
|E[S]| &= |E[C_1]| + |E[C_2]|\\
&\leq k|C_1| - l + k|C_2| - l\\
&= k|S| - 2l.
\end{align}
Since we assume that $k \leq l$ 
this contradicts the assumption that $|E[S]| \geq k|S| - 2k + 1$.
\end{proof}

\subsection{Replacing edges}
We show that for a $(k,l)$-tight graph, edges can be replaced by a (planar) graph of constant size, while preserving the $(k,l)$-tight property.

\begin{lemma}\label{lem:replaceEdge}
Let $G=(V,E)$ be a graph with an edge $(u,v) \in E$ and 
let $0 \leq l \leq 2k - 1$.
Let $\Omega = (V_{\Omega}, E_{\Omega})$ be a $(k,2k-1)$-tight graph with at least two vertices that we identify with~$u,v$.
Let $G^* = (V^*,E^*)$ be the graph obtained from~$G$
by removing edge~$(u,v)$ and taking the union of~$G$ and~$\Omega$
by identifying nodes~$u$ and~$v$ in ~$G$ and~$\Omega$.
That is, $G^* = G - (u,v) + \Omega$ for short.

Then $G$ is $(k,l)$-tight iff $G^*$ is $(k,l)$-tight.
\end{lemma}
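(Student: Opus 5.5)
Let $W = V_\Omega \setminus \{u,v\}$ be the new vertices. Then $|V^*| = |V| + |W|$ and $|E^*| = |E| - 1 + k|V_\Omega| - (2k-1) = |E| + k|W|$. So $|E^*| = k|V^*| - l$ iff $|E| = k|V| - l$, and the edge-count condition transfers in both directions. It remains to compare the sparsity conditions on subsets $S$ with $|S|\ge 2$.

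\emph{($G^*$ sparse $\Rightarrow$ $G$ sparse.)} Take $S \subseteq V$. If $S$ does not contain both $u$ and $v$, then $E[S]$ is the same in $G$ and in $G^*$, so the bound carries over. If $u,v \in S$, put $S^* = S \cup W$. Then $|E^*[S^*]| = |E[S]| - 1 + |E_\Omega|$, which gives $|E[S]| = |E^*[S^*]| - k|W| \le k|S^*| - l - k|W| = k|S| - l$.

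\emph{($G$ sparse $\Rightarrow$ $G^*$ sparse.)} Take $S^* \subseteq V^*$ and write $S = S^* \cap V$ and $T = S^* \cap V_\Omega$. Then $E^*[S^*]$ splits as $(E[S] \setminus \{uv\})$ together with $E_\Omega[T]$. The sparsity of $\Omega$ gives $|E_\Omega[T]| \le k|T| - (2k-1)$ whenever $|T|\ge 2$, and $|E_\Omega[T]| = 0$ when $|T|\le 1$. We split into cases.
\begin{enumerate}
\item If $u,v\in S^*$, then $|E^*[S^*]| \le |E[S]| - 1 + k|T| - 2k + 1 = |E[S]| + k(|T|-2)$. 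Since $|T|-2$ is exactly the number of vertices of $S^*$ in $W$, this is at most $k|S| - l + k|S^*\cap W| = k|S^*| - l$.
\item If at most one of $u,v$ lies in $S^*$, then $|S \cap T| \le 1$ and the two edge sets are disjoint. When $|S|\ge 2$ and $|T|\ge 2$, we get $|E^*[S^*]| \le k|S| - l + k|T| - 2k + 1$. Since $|S^*| \ge |S| + |T| - 1$, this is at most $k|S^*| - l + (1-k)$, which is at most $k|S^*| - l$ because $k\ge 1$.
\item The degenerate subcases are $|S|\le 1$ or $|T|\le 1$, where one part contributes no edges.
\end{enumerate}

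The last case is the main obstacle. If $S^* \subseteq W \cup \{u\}$, we need $k|T| - (2k-1) \le k|T| - l$, i.e.\ $l \le 2k-1$, which is exactly the hypothesis. If instead $|T|\le 1$ and $|S|\ge 2$, then $E^*[S^*] \subseteq E[S]$ and the bound follows from sparsity of $G$ together with $|S|\le|S^*|$. I would also check the case where $S$ is a single vertex $u$ but $T$ is larger, since the edges there come only from $\Omega$. Finally, multigraph edges parallel to $uv$ remain in $G^*$, and the bookkeeping above handles them because only the single copy $uv$ is removed.
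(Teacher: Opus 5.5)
Your proof is correct and takes the same route as the paper: the edge-count identity $|E^*|-k|V^*|=|E|-k|V|$; adding all of $W$ to $S$ for the direction ($G^*$ sparse $\Rightarrow$ $G$ sparse); and, for the other direction, splitting $S^*$ into its $G$-part and $\Omega$-part according to how many of $u,v$ it contains and adding the two sparsity bounds. You are more careful than the paper about parts with at most one vertex, where the paper's displayed bounds (e.g.\ $k\cdot 1-l$ for $\{u\}$) can go negative and $l\le 2k-1$ is what is really needed; the one subcase you leave implicit, $S=\{x\}$ with $x\notin\{u,v\}$, is immediate because then only $\Omega[T]$ with $T\subseteq W$ contributes edges.
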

\begin{proof}
Observe that 
\begin{align}\label{eq:nodeCount}
|V^*| &= |V| + |V_{\Omega}| -2,\\
|E^*| &= |E| + |E_{\Omega}| -1. \label{eq:edgeCount}
\end{align}
Assume $G$ is $(k,l)$-tight.
We have
\begin{align}
|E^*| &= k|V| - l - 1 + k|V_{\Omega}| - 2k + 1\\
&=k|V^*| - l.
\end{align}
Let $S \subseteq V^*$.
When $u,v \notin S$, set $S$ clearly has at most $k|S| - l$ many edges since~$G$ and~$\Omega$ are $(k,l)$-sparse.
Assume only one of $\{u,v\}$, say~$u$, is contained in~$S$.
Then we have
\begin{align}
|E^*[S]| &= |E[(S \setminus V_{\Omega}) \cup \{u\}]| + |E_{\Omega}[S\cap V_{\Omega}]|\\
&\leq k|(S \setminus V_{\Omega}) \cup \{u\}| - l + k|S\cap V_{\Omega}| - 2k + 1 \\
&\leq k|S| - l - k + 1\\
&\leq k|S| - l \label{eq:kgeq1}.
\end{align}
The last inequality follows since we assume that $k\geq 1$.

Now assume $u,v \in S$.
Then we have
\begin{align}
|E^*[S]| &= |E[(S \setminus V_{\Omega}) \cup \{u,v\}]| - 1 + |E_{\Omega}[S\cap V_{\Omega}]|\\
&\leq k|(S \setminus V_{\Omega}) \cup \{u,v\}| - l - 1 + k|S\cap V_{\Omega}| - 2k + 1 \\
&\leq k|S| - l.
\end{align}

For the other direction, assume that $G^*$ is $(k,l)$-tight.
By Equation~(\ref{eq:edgeCount}) we have 
\begin{align}
|E| &= |E^*| - |E_{\Omega}| + 1\\
&= (k|V^*| - l) - (k|V_{\Omega}| - 2k + 1) + 1\\
&=k|V| - l.
\end{align}

Let $S \subseteq V$.
Assume $u,v \in S$, otherwise $S$ would clearly have at most $k|S| - l$ edges since $G^*$ is $(k,l)$-tight.
Let $T = S \cup V_{\Omega}$.
Then we have
\begin{align}
|E[S]| 
&= |E^*[T]| + 1 - |E_{\Omega}| \\
&\leq (k|T| - l) + 1 - (k|V_{\Omega}| - 2k +1)\\
&= k|S| - l.
\end{align}
\end{proof}

When graph~$G$ has an edge crossing~$(a,b),~(c,d)$,
we put the gadget~$\Gamma$ to replace the crossing and obtain graph~$G'$.
By Lemma~\ref{lem:replaceEdge},
we may assume that the gadget~$\Gamma$ does not have
any of the edges within~$\{a,b,c,d\}$,
and the same will then hold for~$G'$.
Similarly it holds for~$G$,
except for~$(a,b),~(c,d)$ that are in~$G$.

\section{Proof of Theorem~\ref{thm:tightGadget}}\label{s:main}

We assume that $k \leq 3$, otherwise the statement of the Theorem is trivial since a $(k,l)$-tight graph would have too many edges to be planar and hence there is no planarizing gadget.
For the same reason, the cases $(3,l)$ are trivial for $l \leq 5$.
Therefore, it only remains to show that no planarizing gadgets exist for $k=2$ and $l=0,1,2,3$.
\ignore{
Given any edge in a graph we can replace the edge by a planar $(k,2k-1)$-tight graph as in Lemma~\ref{lem:replaceEdge} while preserving the $(k,l)$-tight property.
From this it follows that for an edge crossing \{ab,cd\} in a graph~$G$ we can assume without loss of generality that no edge from~$\binom{\{a,b,c,d\}}{2}$ is present in $G$ except for the crossing edges $ab,cd$.
In particular, there is no need to consider different gadgets depending on the the presence of different choices of edges from~$\binom{\{a,b,c,d\}}{2}$ in~$G$.
Moreover, we can also assume that there are no multi-edges present in $G$ since we can also remove them by Lemma~\ref{lem:replaceEdge}.}
For our proof we assume for contradiction that a planarizing gadget exists.
Clearly, any planarizing gadget must be $(k,l)$-sparse.
Let $G = (V,E)$ be graph with a crossing~$\{ab,cd\}$.
We denote $G' = G - \{ab, cd\} + \Gamma$ as the graph in which we have replaced the crossing with the planarizing gadget.
In Section~\ref{s:verySparse} we first show that certain vertex sets induce subgraphs in the gadget which are \emph{even more sparse}.
In contrast to that, we show the existence of some subgraphs in the gadget which have a lower bound on number of edges (Section~\ref{s:veryDense}).
Then we put everything together in Section~\ref{s:contradiction} by constructing a subgraph in the gadget where the upper bound is smaller than the lower bound from the previous sections.
Therefore we get a contradiction. 

To derive the above properties of the gadget, we use different choices for the graph $G$, since the gadget must work for \emph{all} graphs. 
For example, the gadget must also work for graphs where the end points of the crossing edges, $a,b,c,d$ are connected to each other even after removing the edges $ab,cd$ in $G$. Therefore in the planarizing gadget $\Gamma$, the vertices identified with $a,b,c,d$ must lie on a common face, 
otherwise~$G'$ would have edge crossings (it is useful to think of $\Gamma$ as lying inside a square bounding box with vertices $a,c,b,d$ in clockwise order). This implies in particular that inside $\Gamma$ there cannot be two vertex disjoint paths, one from $a$ to $b$, and another from $c$ to $d$, since $\Gamma$ must be planar. 

\subsection{Upper bounds on the number of edges in the gadget}\label{s:verySparse}

We first make an observation about the size of the gadget.
\begin{lemma}\label{lem:gadgetSize}
Assume that $\Gamma = (V_{\Gamma},E_{\Gamma})$ is a planarizing gadget for $(k,l)$-tight graphs.
Then we have
\begin{align}\label{eq:gadgetSize}
|E_{\Gamma}| = k|V_{\Gamma}|  - 4k + 2.
\end{align}
\end{lemma}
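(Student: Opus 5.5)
We prove the identity by counting edges in a single well-chosen instance. The gadget must satisfy $G \in L \Leftrightarrow G' \in L$ for every graph $G$ with a crossing $\{ab,cd\}$, so it is enough to find one $(k,l)$-tight graph $G$ that has a drawing in which $ab$ and $cd$ cross. Then $G'$ must also be $(k,l)$-tight, and comparing the edge counts of $G$ and $G'$ forces the size of $\Gamma$.

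Write $G' = G - \{ab,cd\} + \Gamma$, where the four vertices $a,b,c,d$ of $\Gamma$ are identified with those of $G$. Then
\[
|V_{G'}| = |V| + |V_\Gamma| - 4, \qquad |E_{G'}| = |E| - 2 + |E_\Gamma| .
\]
We may assume, as remarked after Lemma~\ref{lem:replaceEdge}, that $\Gamma$ has no edges inside $\{a,b,c,d\}$, so no edges are double counted. If $G$ is $(k,l)$-tight, then $|E| = k|V| - l$. Tightness of $G'$ gives
\[
|E| - 2 + |E_\Gamma| = k(|V| + |V_\Gamma| - 4) - l .
\]
Substituting $|E| = k|V| - l$ yields $|E_\Gamma| = k|V_\Gamma| - 4k + 2$, which is exactly (\ref{eq:gadgetSize}).

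It remains to exhibit a $(k,l)$-tight graph $G$ containing two vertex-disjoint edges $ab$ and $cd$ that can be drawn crossing. This is the only step needing care, but it is mild, since any drawing may be altered so that two chosen disjoint edges cross. We need only $k = 2$ and $0 \le l \le 3$, because the other cases are trivial as explained above. For $l = 3$, a Laman graph on four or more vertices with two disjoint edges works, for example $K_4$ minus an edge. For smaller $l$, take a sufficiently large $(2,l)$-tight graph, such as a union of $2$ edge-disjoint spanning trees when $l = 2$, suitably augmented otherwise. Existence of such graphs for large $n$ follows from the matroid structure cited from~\cite{LEE2008}.

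Alternatively, one can start from any tight graph with two disjoint edges and use Lemma~\ref{lem:replaceEdge} to adjust the graph while preserving tightness. Either way we get a valid $G$. The forward direction of the gadget property then gives that $G'$ is tight, and the counting above concludes the proof.
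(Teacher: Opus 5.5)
Your proof is correct and is essentially the paper's argument: fix a $(k,l)$-tight $G$ in which $ab$ and $cd$ cross, use that $G'$ must then be tight, and compare $|E'| = |E| - 2 + |E_\Gamma|$ with $|V'| = |V| + |V_\Gamma| - 4$. The one addition is your explicit check that such a $G$ exists (e.g.\ $K_4$ minus an edge for $l=3$), which the paper leaves implicit.
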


\begin{proof}
Let $G = (V,E)$ be a $(k,l)$-tight graph
and $G' = (V',E') = G - \{ab, cd\} + \Gamma$ be the graph obtained from~$G$ when putting~$\Gamma$.
Note that~$G'$ is also $(k,l)$-tight, 
otherwise the gadget would not be valid.
Hence we can compute
\begin{align}\label{eq:Gedges}
|E_{\Gamma}| & = |E'| - |E| + 2\\
&= k|V'| - l - k|V| + l + 2\\
&=k(|V_{\Gamma}| - 4) + 2\\
&=k|V_{\Gamma}|  - 4k + 2.
\end{align}
\end{proof}

Any subset of vertices~$S$ of $\Gamma$ can induce at most $k|S| - l$ many edges, otherwise the graph $G'$ would not be $(k,l)$-tight.
When we consider subsets $S \subseteq V_{\Gamma}$ that contain the vertices~$a,b,c,d$, we can strengthen the sparsity condition as follows.

\begin{lemma}\label{lem:sparse1}
Assume that $\Gamma = (V_{\Gamma},E_{\Gamma})$ is a planarizing gadget for $(k,l)$-tight graphs.
Then, for every subset $S\subseteq V_{\Gamma}$, where $a,b,c,d \in S$, we have
\begin{align}\label{eq:sparse1}
|\E_{\Gamma}[S]| \leq k|S| - 4k + 2.
\end{align}
\end{lemma}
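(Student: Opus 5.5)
Suppose some $S \subseteq V_\Gamma$ with $a,b,c,d\in S$ violates \eqref{eq:sparse1}. The plan is to build a $(k,l)$-tight host graph $G$ with the crossing $\{ab,cd\}$ such that, after inserting $\Gamma$, the vertex set $S$ together with all of $V\setminus\{a,b,c,d\}$ induces too many edges in $G'$. This would contradict the validity of the gadget.

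First, I fix a $(k,l)$-tight graph $G=(V,E)$ containing the edges $ab$ and $cd$. By Lemma~\ref{lem:replaceEdge} such graphs exist for $k=2$, $0\le l\le 3$; for instance, one can start from any small tight graph and adjust it. Let $G'=G-\{ab,cd\}+\Gamma$. By validity, $G'$ is $(k,l)$-tight. Now let $T=(V\setminus\{a,b,c,d\})\cup S\subseteq V'$. Then
\begin{align}
|E'[T]| &= |E| - 2 + |E_\Gamma[S]|,\\
|T| &= |V| - 4 + |S|,
\end{align}
since all edges of $G$ other than $ab,cd$ survive and lie inside $T$. The second equality holds because $S$ contains $a,b,c,d$. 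Sparsity of $G'$ applied to $T$ gives
\[
|E| - 2 + |E_\Gamma[S]| \le k(|V|-4+|S|) - l .
\]
Substituting $|E|=k|V|-l$, this becomes $|E_\Gamma[S]|\le k|S|-4k+2$, which is exactly \eqref{eq:sparse1}. This step is the same computation as in Lemma~\ref{lem:gadgetSize}, only localized to $S$.

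The only real obstacle is the existence of a $(k,l)$-tight $G$ containing both crossing edges $ab$ and $cd$ on four distinct vertices. The argument does not need $G$ to be drawn with that particular crossing, only the combinatorial substitution. I would construct $G$ explicitly for $k=2$, $l\in\{0,1,2,3\}$. For $l=3$, take a Laman graph on at least four vertices containing two disjoint edges, e.g.\ $K_4$ minus an edge, which has $5=2\cdot4-3$ edges. For smaller $l$, add $3-l$ further edges (parallel edges or loops if allowed, else a larger base graph) while keeping sparsity; alternatively take $k$ edge-disjoint spanning trees for $l=k$. In general one can take any $(k,l)$-tight graph with a matching of size two, and Lemma~\ref{lem:gadgetSize} already presupposes such a $G$ exists. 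If $S=V_\Gamma$ or $|S|$ is small, the bound is unaffected, since the computation used only $a,b,c,d\in S$.
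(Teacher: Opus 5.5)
Your proposal is correct and is essentially the paper's own proof: the paper applies the sparsity of $G'$ to the set $S \cup V$ (which is your $T$, since $a,b,c,d \in S$) and substitutes $|E| = k|V|-l$ to obtain exactly $k|S|-4k+2$. Your extra care about the existence of a $(k,l)$-tight host graph with two disjoint edges $ab, cd$ (which can always be drawn crossing) fills a point the paper leaves implicit. However, Lemma~\ref{lem:replaceEdge} is not what guarantees such a graph exists; explicit examples do, such as $K_4$ minus an edge for $l=3$, $K_4$ for $l=2$, $K_5$ minus an edge for $l=1$, and $K_5$ for $l=0$.
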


\begin{proof}
Let $G = (V,E)$ be a $(k,l)$-tight graph
and $G' = (V',E') = G - \{ab, cd\} + \Gamma$ be the graph obtained from~$G$ when putting~$\Gamma$.
Let~$S \subseteq V_{\Gamma}$ be a set of vertices such that $a,b,c,d \in S$.
Since $G'= (V',E')$ must also be $(k,l)$-tight we have
\begin{align}
|E_{\Gamma}[S]|
&= |E'[S \cup V]|  - (|E| - 2) \\
&\leq k|S \cup V| - l - |E| + 2\\
&= k(|S| + |V| -4) - l - (k|V| - l) + 2\\
&= k|S| - 4k + 2.
\end{align}
\end{proof}

For $(2,0)$-tight graphs,
we need a similar statement when~$S$ contains just two
nodes from $\{a,b,c,d\}$,
namely one from $\{a,b\}$ and one from $\{c,d\}$.

\begin{lemma}\label{lem:sparse2}
Assume that $\Gamma=(V_{\Gamma},E_{\Gamma}) $ is a planarizing gadget for $(2,0)$-tight graphs.
Then, for every subset $S\subseteq V_{\Gamma}$, where
$|\{a,b\} \cap S| = 1$ and $|\{c,d\} \cap S| = 1$,
we have
\begin{align}\label{eq:sparse2}
|\E_{\Gamma}[S]| \leq 2|S| - 4.
\end{align}
\end{lemma}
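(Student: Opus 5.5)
The plan is to mimic the proof of Lemma~\ref{lem:sparse1}, but with a graph $G$ built so that the two chosen terminals lie in a part of $G$ that is already $(2,0)$-tight on its own. By symmetry, assume $S\cap\{a,b,c,d\}=\{x,y\}$ with $x\in\{a,b\}$ and $y\in\{c,d\}$. Let $x'$ and $y'$ be the other endpoints, so that $xx'$ and $yy'$ are the crossing edges.

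\textbf{Construction of $G$.} Let $G$ be the disjoint union of two graphs $H_1=(V_1,E_1)$ and $H_2=(V_2,E_2)$, together with the crossing edges $xx'$ and $yy'$. We require:
\begin{itemize}
\item[(i)] $x,y\in V_1$, $x',y'\in V_2$, and $H_1$ is $(2,0)$-tight, i.e.\ $|E_1|=2|V_1|$ and $H_1$ is $(2,0)$-sparse. For example, take a doubled $4$-cycle on four vertices containing $x,y$ as non-adjacent vertices (multigraphs are allowed).
\item[(ii)] $H_2$ is $(2,2)$-tight, e.g.\ the union of two edge-disjoint spanning trees on a vertex set containing $x',y'$.
\end{itemize}
We may draw $G$ so that $xx'$ and $yy'$ cross exactly once and $x,y,x',y'$ appear in the required cyclic order. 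By Lemma~\ref{lem:replaceEdge} we may also assume there are no edges among $\{a,b,c,d\}$ other than the crossing edges.

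\textbf{$G$ is $(2,0)$-tight.} The edge count is
\[
|E| = 2|V_1| + (2|V_2|-2) + 2 = 2|V|.
\]
For sparsity, take any $X\subseteq V$ and write $X_i=X\cap V_i$. If $X_2=\emptyset$, then $|E[X]|\le 2|X_1|$ by (i). If $|X_2|\ge 1$, then
\[
|E[X]|\le 2|X_1| + (2|X_2|-2) + 2 = 2|X|.
\]
Here a single vertex of $X_2$ spans no edges of $H_2$, so the bound $2|X_2|-2$ is still valid. Hence $G$ is $(2,0)$-tight, and therefore $G'$ must be $(2,0)$-tight as well.

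\textbf{The counting step.} Set $T=V_1\cup S$ in $G'$. Since $x,y$ are shared between $V_1$ and $S$, we have $|T|=|V_1|+|S|-2$. All edges of $H_1$ survive in $G'$ (none of them is a crossing edge), and $H_1$ and $\Gamma$ share no edges. Therefore
\[
2|V_1| + |E_\Gamma[S]| \le |E'[T]| \le 2|T| = 2|V_1|+2|S|-4,
\]
which gives $|E_\Gamma[S]|\le 2|S|-4$.

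The only delicate point is choosing the split of edge counts between $H_1$ and $H_2$. A naive choice, making both sides $(2,1)$-tight, only yields the bound $2|S|-3$. The key is to put all the ``slack'' into $H_2$: we make $H_1$ fully tight and $H_2$ be $(2,2)$-tight, which is exactly compensated by the two crossing edges. We also need to confirm that a $(2,0)$-tight $H_1$ can contain both $x$ and $y$ with no edge $xy$; the doubled $4$-cycle does this.
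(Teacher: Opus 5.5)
Your proof is correct and uses the same idea as the paper. Both exhibit a $(2,0)$-tight $G$ in which the two chosen terminals lie in a tight vertex set $W$ avoiding the other two terminals, and then apply the sparsity of $G'$ to $W\cup S$. The paper takes $K_5$ on $\{a,d,1,2,3\}$ with $b,c$ each joined to $a,d$ (so $W=V\setminus\{b,c\}$) and relabels for the other cases; your doubled $4$-cycle joined to a $(2,2)$-tight part by the crossing edges handles all four cases at once, which is a cosmetic difference.
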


\begin{proof}
Consider the $(2,0)$-tight graph~$G = (V,E)$ in Figure~\ref{fig:sparse2}.
Observe that $G \setminus \{b,c\}$ has~$10$ edges.
\begin{figure}[htbp]
\centering
    \usetikzlibrary{positioning, fit, shapes.geometric}

\tikzstyle{every node}=[circle, draw, fill=black,inner sep=0pt, minimum width=4pt,
node distance =1 cm and 1cm ] 
\begin{tikzpicture}[scale=1.5]

\def\dx{3.5}

\begin{scope}[shift={(\dx,0)}]

    \node (a) at (0,0) [label=above:$a$]{}; 
    \node (c) at (1,0) [label=above:$c$]{};
    \node (d) at (0,-1) [label=below:$d$]{};
    \node (b) at (1,-1) [label=below:$b$]{};

    \node (1) at (-1,0) []{};
    \node (2) at (-1,-1) []{}; 

    \node (3) at (-1.5,-0.5) []{};

    \draw[] (1) -- (2);
    \draw[] (1) -- (3);
    \draw[] (1) -- (a);
    \draw[] (1) -- (d);
    \draw[] (2) -- (a);
    \draw[] (2) -- (d);
    \draw[] (2) -- (3);
    \draw[] (3) -- (a);
    \draw[] (3) -- (d);
    \draw[] (a) -- (d);
    \draw[] (b) -- (a);
    \draw[] (b) -- (d);
    \draw[] (c) -- (a);
    \draw[] (c) -- (d);

     \node[below, fill=none, draw=none] at (0,-1) (a) {$(2,0)$-tight};
\end{scope} 

\begin{scope}[shift={(2*\dx,0)}]
    \node (a) at (0,0) [label=above:$a$]{}; 
    \node (c) at (1,0) [label=above:$c$]{};
    \node (d) at (0,-1) [label=below:$d$]{};
    \node (b) at (1,-1) [label=below:$b$]{};

    \node (1) at (-1,0) []{};
    \node (2) at (-1,-1) []{}; 

    \node (3) at (-1.5,-0.5) []{};

    \draw[] (1) -- (2);
    \draw[] (1) -- (3);
    \draw[] (1) -- (a);
    \draw[] (1) -- (d);
    \draw[] (2) -- (a);
    \draw[] (2) -- (d);
    \draw[] (2) -- (3);
    \draw[] (3) -- (a);
    \draw[] (3) -- (d);
    \draw[] (b) -- (a);
    \draw[] (b) -- (d);
    \draw[] (c) -- (a);
    \draw[] (c) -- (d);

      \node[below, fill=none, draw=none] at (0,-1) (b) {$(2,1)$-tight};
\end{scope}
 \end{tikzpicture}
        \caption{The graphs used in the proofs of Lemma~\ref{lem:sparse2} and Lemma~\ref{lem:2}}.
        \label{fig:sparse2}
\end{figure}
Let $S \subseteq V_{\Gamma}$, where $a,d \in S$ and $b,c \notin S$.
We have
\begin{align}
|E_{\Gamma}[S]|
&=  |E'[S \cup V \setminus \{b,c\}]|  -  10\\
&\leq 2|S \cup V \setminus \{b,c\}| - 10\\
&= 2(|S| + 3) - 10\\
&=2|S| - 4.
\end{align}
By relabeling the vertices~$a,b,c,d$ in the graph from Figure~\ref{fig:sparse2} we obtain the bound for all cases from the statement.
\end{proof}

\subsection{Lower bounds on the number of edges in the gadget}\label{s:veryDense}
When the given graph~$G$ is not $(k,l)$-tight,
graph~$G'$ should also be not $(k,l)$-tight. 
This means that $G'$ must contain a subset of vertices that violates the $(k,l)$-sparsity condition. 
The following lemma is useful for such an argument.

\begin{lemma}\label{lem:2}
Assume that $\Gamma=(V_{\Gamma},E_{\Gamma})$ is a planarizing gadget for $(2,l)$-tight graphs where $l = 0,1,2,3$.
Then there are two subsets $S_1,S_2\subseteq V_{\Gamma}$ such that
\begin{align}
a,b \in S_1,~ c,d \notin S_1 &~\text{ and }~ |E_{\Gamma}[S_1]| \geq 2|S_1| - 3,\\
c,d \in S_2,~ a,b \notin S_2 &~\text{ and }~ |E_{\Gamma}[S_2]| \geq 2|S_2| - 3.
\end{align}
Moreover, the subgraphs $\Gamma[S_1],\Gamma[S_2]$ contain a path from $a$ to $b$ and $c$ to $d$, respectively.
\end{lemma}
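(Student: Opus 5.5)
We will take a graph $G$ that is \emph{not} $(2,l)$-tight, but only barely, and that contains the crossing $ab,cd$. Since $\Gamma$ is a valid gadget, $G'$ must also fail to be $(2,l)$-tight. The violation in $G'$ will then be forced into $\Gamma$ on a set containing $a,b$ but not $c,d$. Concretely, we take the graph from Figure~\ref{fig:sparse2} with the edge $cd$ replaced by the edge $ab$. The intent is that the resulting $G$ has exactly $2|V|-l$ edges, but the edge $ab$ appears twice, i.e.\ the set $\{a,b\}$, or a small set around it, is overfull. The edge count is kept correct by removing one edge elsewhere, for instance an edge at $c$ or $d$, so that the only violation sits at $\{a,b\}$. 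We use the $(2,0)$-tight graph for $l=0$ and the $(2,1)$-tight graph for $l=1$. For $l=2,3$ we use analogous small tight graphs, for instance a $(2,3)$-tight Laman graph on $a,d$ plus extra vertices, in which $b$ and $c$ are each attached by two edges.

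The key calculation is as follows. Since $G$ is not $(2,l)$-tight but $|E'|=2|V'|-l$ (by Lemma~\ref{lem:gadgetSize} the edge count transfers), $G'$ has a violating set $T\subseteq V'$ with $|E'[T]|\ge 2|T|-l+1$. Write $S_1=T\cap V_\Gamma$ and split $|E'[T]|=|E_\Gamma[S_1]|+|E_{G-\{ab,cd\}}[T\cap V]|$. The part outside the gadget is bounded using sparsity of the auxiliary graph, which is tight apart from the planted defect. We then go through the possible values of $T\cap\{a,b,c,d\}$:
\begin{itemize}
\item If $T$ contains all four vertices, Lemma~\ref{lem:sparse1} gives a contradiction.
\item If $T$ contains one of $a,b$ and one of $c,d$ (for $l=0$), Lemma~\ref{lem:sparse2} gives a contradiction; for other $l$ the analogous count applies.
\item If $T$ contains neither $a$ nor $b$ together with both of $c,d$, the part outside the gadget is honestly sparse. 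Since $\Gamma$ itself is sparse inside $G'$ for tight $G$, this also gives a contradiction.
\end{itemize}
The only surviving case is $a,b\in T$ and $c,d\notin T$. There the outside part contributes at most $2|T\cap V|-l-2+\epsilon$ edges, where $\epsilon$ is determined by the gadget graph. Subtracting yields $|E_\Gamma[S_1]|\ge 2|S_1|-3$. The set $S_2$ is obtained symmetrically by planting the double edge on $cd$.

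For the path claim, apply Lemma~\ref{lem:con} inside $G'$, which we may take tight by using any tight host graph, so that $\Gamma$ is $(2,l)$-sparse. For $l\ge 2=k$ the bound $|E_\Gamma[S_1]|\ge 2|S_1|-3=k|S_1|-2k+1$ makes $\Gamma[S_1]$ connected, so it contains an $a$--$b$ path. For $l=0,1$ Lemma~\ref{lem:con} does not apply directly. Instead we argue that if $\Gamma[S_1]$ split into components separating $a$ from $b$, we could take the component containing $a$ and the one containing $b$ and rerun the violation argument with a host $G$ in which $a$ and $b$ are linked only through the planted edge. In that host graph, the edge counts of the separated pieces are each bounded by Lemma~\ref{lem:sparse1}-type estimates, which forces a contradiction.

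The main obstacle is choosing, for each $l\in\{0,1,2,3\}$, host graphs whose counts make every case except ``$a,b\in T$, $c,d\notin T$'' impossible and give exactly the constant $-3$. I would first try the Figure~\ref{fig:sparse2} graphs with an extra parallel edge $ab$ and one edge at $c$ deleted, and then verify the case analysis by direct counting.
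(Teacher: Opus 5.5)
\textbf{There is a genuine gap: nothing in your argument forces $a,b\in T$ and $c,d\notin T$.} Your template matches the paper's: take a non-tight host $G$, find a violating set $T$ in $G'$, put $S_1=T\cap V_\Gamma$, and subtract the edges outside the gadget. But this localization step is the one that makes the template work, and it is missing.

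Your case split on $T\cap\{a,b,c,d\}$ omits cases such as $\{a,b,d\}\subseteq T$, $c\notin T$. The cases you do list do not close either:
\begin{itemize}
\item Lemma~\ref{lem:sparse1} bounds only the gadget part. So $\{a,b,c,d\}\subseteq T$ is excluded only if the host has no violating set containing all four terminals.
\item For $T\cap\{a,b,c,d\}=\{c,d\}$, plain sparsity of both parts gives at best $|E'[T]|\le 2|T|-l+(3-l)$. That is a contradiction only when $l\ge 3$.
\end{itemize}
Your concrete host fails exactly here. Take the $(2,0)$ graph of Figure~\ref{fig:sparse2}, add a second $ab$ edge and delete $ca$ (you cannot delete $cd$ without destroying the crossing). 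Then $V\setminus\{c\}$ has $13>12$ edges. Yet every $U$ with $a,b\in U$ and $c,d\notin U$ has at most $2|U|-2$ edges. So every violation of $G$ uses $d$. Nothing in your argument excludes a violating set of $G'$ that contains $a,b,d$ but not $c$, and such a set yields no $S_1$ avoiding $c$ and $d$.

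\textbf{The missing idea is the paper's edge-moving trick.} Pick the non-tight $G$ together with two $(2,l)$-tight graphs $H_1,H_2$, each differing from $G$ by moving a single edge ($ae\to de$ and $bf\to cf$ in Figure~\ref{fig:Gh1h2}). Since $H_1'$ and $H_2'$ must be tight, the violating set $T$ of $G'$ satisfies the sparsity bound in both. Hence each move must remove an edge from $E'[T]$. This forces $a,e\in T$, $d\notin T$ and $b,f\in T$, $c\notin T$, with no case analysis over the unknown gadget. Sparsity of $G-ab$ on $T_1=(T\setminus S_1)\cup\{a,b\}$ then gives $|E_\Gamma[S_1]|\ge 2|S_1|-3$ directly.

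\textbf{Your path argument for $l=0,1$ is also off target.} What is needed is an upper bound $|E_\Gamma[A]|\le 2|A|-2$ for gadget sets $A$ whose only terminal is $a$. The paper obtains it from the \emph{tight} graphs of Figure~\ref{fig:sparse2} via the set $A\cup(V\setminus\{b,c\})$. Bounds of the type in Lemma~\ref{lem:sparse1}, which concern sets containing all four terminals, do not give this, and no non-tight host is involved.
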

\begin{proof}
We first show the existence of the set~$S_1$.
The existence of~$S_2$  follows by a similar argument when we change the labeling of the vertices in the crossing by mapping $a,b,c,d$ to $c,d,b,a$, respectively.
Consider Figure~\ref{fig:Gh1h2} where we have three graphs $G,H_1,H_2$.
\begin{figure}[ht]
\begin{center}
    \usetikzlibrary{positioning, fit, shapes.geometric}

\tikzstyle{every node}=[circle, draw, fill=black,inner sep=0pt, minimum width=4pt,
node distance =1 cm and 1cm ] 
\begin{tikzpicture}[scale=1.2]

\def\dx{3}

\begin{scope}[shift={(\dx,0)}]
    \node (a) at (0,0) [label=left:$a$]{}; 
    \node (c) at (0,-2) [label=left:$c$]{};
    \node (d) at (1,-2) [label=right:$d$]{};
    \node (b) at (1,0) [label=right:$b$]{};

    \node (e) at (0,-1) [label=left:$e$]{};
    \node (f) at (1,-1) [label=right:$f$]{};

	\draw[] (a) -- (b);
	\draw[] (c) -- (d);

    \draw[] (d) -- (f);
    \draw[] (e) -- (f);
    \draw[] (a) -- (f);
    \draw[] (a) -- (e);
    \draw[] (b) -- (f);
    \draw[] (b) -- (e);
    \draw[] (c) -- (e);

       \node[below, fill=none, draw=none] at (0.5,-2.6) (a) {$G$};

\end{scope}

\begin{scope}[shift={(2*\dx,0)}]
    
    \node (a) at (0,0) [label=left:$a$]{}; 
    \node (c) at (0,-2) [label=left:$c$]{};
    \node (d) at (1,-2) [label=right:$d$]{};
    \node (b) at (1,0) [label=right:$b$]{};

    \node (e) at (0,-1) [label=left:$e$]{};
    \node (f) at (1,-1) [label=right:$f$]{};

	\draw[] (a) -- (b);
	\draw[] (c) -- (d);

    \draw[] (d) -- (f);
    \draw[] (e) -- (f);
    \draw[] (a) -- (f);
    \draw[] (d) -- (e);
    \draw[] (b) -- (f);
    \draw[] (b) -- (e);
    \draw[] (c) -- (e);

       \node[below, fill=none, draw=none] at (0.5,-2.6) (b) {$H_1$};
   
\end{scope}

\begin{scope}[shift={(3*\dx,0)}]
    
    \node (a) at (0,0) [label=left:$a$]{}; 
    \node (c) at (0,-2) [label=left:$c$]{};
    \node (d) at (1,-2) [label=right:$d$]{};
    \node (b) at (1,0) [label=right:$b$]{};

    \node (e) at (0,-1) [label=left:$e$]{};
    \node (f) at (1,-1) [label=right:$f$]{};

	\draw[] (a) -- (b);
	\draw[] (c) -- (d);

    \draw[] (d) -- (f);
    \draw[] (e) -- (f);
    \draw[] (a) -- (f);
    \draw[] (a) -- (e);
    \draw[] (c) -- (f);
    \draw[] (b) -- (e);
    \draw[] (c) -- (e);

       \node[below, fill=none, draw=none] at (0.5,-2.6) (c) {$H_2$};
\end{scope}

 \end{tikzpicture}
  
        \caption{The graphs for Lemma~\ref{lem:2} for $l=3$. Graph $G$ is not $(2,3)$-tight, but $H_1, H_2$ are. 
        }
        \label{fig:Gh1h2}
\end{center}  
\end{figure}
The graph $G = (V,E)$ is a not $(2,3)$-tight because the set $V - \{c,d\}$ violates the sparsity condition.
When we move the edge $(a,e)$ to $(d,e)$ we obtain a new graph~$H_1$ which is $(2,3)$-tight.
Analogously, 
we obtain a graph~$H_2$  by moving the edge $(b,f)$ to $(c,f)$ which is also $(2,3)$-tight.

We can construct similar graphs for the cases $l=0,1,2$, see Figure~\ref{fig:otherGs}.

\begin{figure}[htb]
\begin{center}
    \usetikzlibrary{positioning, fit, shapes.geometric}

\tikzstyle{every node}=[circle, draw, fill=black,inner sep=0pt, minimum width=4pt,
node distance =1 cm and 1cm ] 
\begin{tikzpicture}[scale=1.2]

\def\dx{3}

\begin{scope}[shift={(\dx,0)}]
   \node (a) at (0,0) [label=left:$a$]{}; 
    \node (c) at (0,-2) [label=left:$c$]{};
    \node (d) at (1,-2) [label=right:$d$]{};
    \node (b) at (1,0) [label=right:$b$]{};

    \node (e) at (0,-1) [label=left:$e$]{};
    \node (f) at (1,-1) [label=right:$f$]{};

    \node (x) at (0.25,0.5) []{};
    \node (y) at (0.75,0.5) []{};

    \draw[] (a) -- (x);
    \draw[] (a) -- (y);
    \draw[] (e) -- (y);
    \draw[] (b) -- (y);
    \draw[] (b) -- (x);
    \draw[] (f) -- (x);
    \draw[] (x) -- (y);

	\draw[] (a) -- (b);
	\draw[] (c) -- (d);

    \draw[] (d) -- (f);
    \draw[] (e) -- (f);
    \draw[] (a) -- (f);
    \draw[] (a) -- (e);
    \draw[] (b) -- (f);
    \draw[] (b) -- (e);
    \draw[] (c) -- (e);

       \node[below, fill=none, draw=none] at (0.5,-2.6) (a) {$l=0$};

\end{scope}

\begin{scope}[shift={(2*\dx,0)}]
   \node (a) at (0,0) [label=left:$a$]{}; 
    \node (c) at (0,-2) [label=left:$c$]{};
    \node (d) at (1,-2) [label=right:$d$]{};
    \node (b) at (1,0) [label=right:$b$]{};

    \node (e) at (0,-1) [label=left:$e$]{};
    \node (f) at (1,-1) [label=right:$f$]{};

        \node (x) at (0.5,0.5) []{};

    \draw[] (a) -- (x);
    \draw[] (b) -- (x);
    \draw[] (e) -- (x);
    \draw[] (f) -- (x);

	\draw[] (a) -- (b);
	\draw[] (c) -- (d);

    \draw[] (d) -- (f);
    \draw[] (e) -- (f);
    \draw[] (a) -- (f);
    \draw[] (a) -- (e);
    \draw[] (b) -- (f);
    \draw[] (b) -- (e);
    \draw[] (c) -- (e);

       \node[below, fill=none, draw=none] at (0.5,-2.6) (a) {$l=1$};
   
\end{scope}

\begin{scope}[shift={(3*\dx,0)}]
    
    \node (a) at (0,0) [label=left:$a$]{}; 
    \node (c) at (0,-2) [label=left:$c$]{};
    \node (d) at (1,-2) [label=right:$d$]{};
    \node (b) at (1,0) [label=right:$b$]{};

    \node (e) at (0,-1) [label=left:$e$]{};
    \node (f) at (1,-1) [label=right:$f$]{};

    \node (x) at (0.5,0.5) []{};

    \draw[] (a) -- (x);
    \draw[] (b) -- (x);
    \draw[] (e) -- (x);
    \draw[] (f) -- (x);

	\draw[] (a) -- (b);
	\draw[] (c) -- (d);

    \draw[] (d) -- (f);
    \draw[] (a) -- (f);
    \draw[] (a) -- (e);
    \draw[] (b) -- (f);
    \draw[] (b) -- (e);
    \draw[] (c) -- (e);

       \node[below, fill=none, draw=none] at (0.5,-2.6) (a) {$l=2$};
\end{scope}

 \end{tikzpicture} 
        \caption{The graphs $G$ for Lemma~\ref{lem:2} for  $l=0,1,2$.
        The corresponding graphs $H_1$ and $H_2$ are defined analogously as in the case $l=3$ in Figure~\ref{fig:Gh1h2}.
        }
        \label{fig:otherGs}
\end{center}   
\end{figure}
Since $G$ is not $(2,l)$-tight, the graph~$G' = (V', E')$ 
where we have put the gadget, must also be not $(2,l)$-tight.
Hence, there exists a set $T \subseteq V'$ that violates the $(2,l)$-sparsity condition,
\begin{equation}\label{eq:T}
|E'[T]| > 2|T| -l.
\end{equation}
However, the graphs~$H_1'$ and~$H_2'$ where we put the gadget must both be $(2,l)$-tight and thus, 
the same set~$T$ must fulfill the $(2,l)$-sparsity condition in~$H_1',H_2'$.
That is, 
$|E_{H_i'}[T]| \leq 2|T| -l$, for $i = 1,2$.
The difference  between~$G$ and $H_1,H_2$ is 
that one edge has moved.
It follows that $a,b \in T$ and $c,d \notin T$, otherwise the edge moves would not decrease the number of edges in~$E_{H_1'}[T]$ and~$ E_{H_2'}[T]$.

\ignore{
We also get that~$T$ violates the sparsity condition in~$G'$ by~$1$,
\begin{equation}\label{eq:T}
|E'[T]| = 2|T| -l+1.
\end{equation}
}

Let $S_1 = T \cap V_{\Gamma}$ be the nodes of~$T$ inside the gadget.
Recall that we defined~$\Gamma$ such that $a,b,c,d \in V_{\Gamma}$.
Hence, we have $a,b \in S_1$ and $c,d \not\in S_1$, as required by the lemma.
Now we want to argue the lower bound on the number of edges induced by~$S_1$.

Let $T_1 = T \setminus S_1 \cup \{a,b\}$
be the nodes of~$T$ outside the gadget, but including~$a,b$.
By Lemma~\ref{lem:replaceEdge},
we may assume that $(a,b) \not\in E_{\Gamma}$.
Note that~$G$ becomes $(2,l)$-sparse when we remove the edge~$(a,b)$.
Hence, $G'[T_1]$ must be $(2,l)$-sparse,
\begin{equation}\label{eq:T_1}
|E'[T_1]| \leq 2 |T_1| -l.
\end{equation}

By definition, we have  $T = T_1 \cup S_1$.
Since $(a,b) \not\in E'$,
from~(\ref{eq:T}) and~(\ref{eq:T_1}), we get
\begin{align}
|E_{\Gamma}[S_1]| = |E'[S_1]| &= |E'[T]| - |E'[T_1]| \\
&\geq  (2|T| -l+1) - (2|T_1| -l)\\
&=  2|T| +1 - 2|T_1| \\
&=  2|T| +1 - 2( |T| - |S_1| + 2)\\
&= 2|S_1| -3.
\end{align}

It remains to argue the last part that the subgraph $\Gamma[S_1]$ contains a path from~$a$ to~$b$.
By Lemma~\ref{lem:con},
the subgraph $\Gamma[S_1]$ is connected  for  $l \geq 2$
and we are done. 

For $l=0,1$, 
let $G$ be the $(2,0)$-tight or the $(2,1)$-tight graph in Figure~\ref{fig:sparse2}.
Assume there is no path from~$a$ to~$b$ in~$S_1$.
Hence,
we can partition~$S_1$ into two components~$A,B$
such that $a \in A$ and $b \in B$,
and $\Gamma[A]$ and $\Gamma[B]$ are disconnected.
Since $G$ is $(2,l)$-tight, $G'$ must also be $(2,l)$-tight.

Choose $X = A \cup V - \{b,c\}$. 
For both graphs~$G$ in Figure~\ref{fig:sparse2}, for  $l=0,1$,
we have that $|X| = |A| + |V- \{b,c\}| -1 = |A| + 4$,
because~$A$ and~$V$ have one node in common,~$a$.
Also,
without nodes~$b,c$,
the graphs have $|E[V-\{b,c\}]| = 10 - l$ edges.
Since~$X$ must fulfill the sparsity condition in~$G'$,
we get
\begin{align}
|E_{\Gamma}[A]|  &= |E'[X]| - |E[V-\{b,c\}]| \\
&=  |E'[X]| - (10 -l)\\
&\leq 2|X| - l - 10 + l\\
&= 2(|A| +4) - 10\\
&= 2|A| -2.
\end{align}

When we change the labeling of the vertices in the example by mapping $a$ to $b$ we get the same bound for $|E_{\Gamma}[B]|$ as well.
But then we get a contradiction:
\begin{align}
2|S_1| - 3 &\leq |E_{\Gamma}[S_1]|\\
&= |E_{\Gamma}[A]| + |E_{\Gamma}[B]|\\
&\leq 2|A| - 2 + 2|B| - 2 \\
&=2|S_1| - 4.
\end{align}
\end{proof}

\subsection{
Getting a contradiction from upper and lower bound
}\label{s:contradiction}
In this section, we prove Theorem~\ref{thm:tightGadget} by showing that the upper bound from Lemma~\ref{lem:sparse1} contradicts the lower bound from Lemma~\ref{lem:2}:
Consider the two subsets $S_1,S_2$ from Lemma~\ref{lem:2}.
Note that $a,b,c,d$ are contained in $S_1 \cup S_2$.
We show that $S_1 \cup S_2$ violates Lemma~\ref{lem:sparse1}.
We distinguish different cases depending on the cardinality of $S_1 \cap S_2$.

\paragraph{Case 1.} $S_1 \cap S_2 = \emptyset$.
By Lemma~\ref{lem:2} we know that the subgraphs induced by $S_1, S_2$ contain a path $P_{ab}$ from $a$ to $b$ and a path $P_{cd}$ from $c$ to $d$, respectively.
A planar embedding of the gadget graph~$\Gamma$ where the vertices~$a,c,d,b$ appear on the outerface in that order looks like the one in Figure~\ref{fig:gadgetDrawing}.
Since we assume $S_1 \cap S_2 = \emptyset$, the sets $S_1,S_2$ have no common vertices and thus, the paths $P_{ab}, P_{cd}$ must cross with each other.
Therefore, the case~$S_1 \cap S_2 = \emptyset$ does not emerge as the gadget would not be planarizing.
\begin{figure}[htbp]
\centering
    \usetikzlibrary{positioning, fit, shapes.geometric}
\usetikzlibrary{calc}

\tikzstyle{every node}=[circle, draw, fill=black,inner sep=0pt, minimum width=4pt,
node distance =1 cm and 1cm ] 
\begin{tikzpicture}[scale=3.2]
    \node (a) at (0,0) [label=left:$a$]{}; 
    \node (c) at (1,0) [label=right:$c$]{};
    \node (d) at (0,-1) [label=left:$d$]{};
    \node (b) at (1,-1) [label=right:$b$]{};

    \begin{scope}[shift={(a))},x={(b)},y={($(a)!1!90:(b)$)}]
    \draw[] (.5,0) ellipse (.5 and .25);
    \end{scope}

    \begin{scope}[shift={(d))},x={(c)},y={($(d)!1!90:(c)$)}]
    \draw[] (.5,0) ellipse (.5 and .25);
    \end{scope}

    \node[below, fill=none, draw=none] at (0.75,-1.2) (s1) {$S_1$};
    \node[below, fill=none, draw=none] at (0.25,-1.2) (s2) {$S_2$};

    \node[below, fill=none, draw=none] at (0.15,-0.05) (sa) {$S_{a}$};
    \node[below, fill=none, draw=none] at (0.85,-0.75) (sb) {$S_{b}$};
    \node[below, fill=none, draw=none] at (0.85,-0.05) (sc) {$S_{c}$};
    \node[below, fill=none, draw=none] at (0.15,-0.75) (sd) {$S_{d}$};
    \node[below, fill=none, draw=none] at (0.5,-0.3) (sd) {$S_1 \cap S_2$};

 \end{tikzpicture}
        \caption{Drawing of the gadget graph~$\Gamma$ with the sets $S_1,S_2$ from Lemma~\ref{lem:2}.}
        \label{fig:gadgetDrawing}
\end{figure}

 \paragraph{Case 2.} $|S_1 \cap S_2| = 1$.
 Then we have
 \begin{align}
 |E_{\Gamma}[S_1 \cup S_2]| &\geq |E_{\Gamma}[S_1]| + |E_{\Gamma}[S_2]|\label{est:case21} \\
 & \geq 2|S_1| - 3 + 2|S_2| - 3\label{est:case22}\\
 & = 2(|S_1 \cup S_2| + 1) - 6\\
 & = 2|S_1 \cup S_2| - 4
 \end{align}
 Inequality~(\ref{est:case22}) follows from Lemma~\ref{lem:2} and contradicts Lemma~\ref{lem:sparse1}.

\paragraph{Case 3.} $|S_1 \cap S_2| \geq 2$.
Then we have

\begin{align}
|E_{\Gamma}[S_1 \cup S_2]| & \geq |E_{\Gamma}[S_1]| + |E_{\Gamma}[S_2]| - |E_{\Gamma}[S_1 \cap S_2]|\label{estimate1}\\
& \geq 2|S_1| - 3 + 2|S_2| - 3 - (2|S_1 \cap S_2| - l)\label{estimate2}\\
& = 2|S_1 \cup S_2| - 6 + l.\label{estimate3}
\end{align}
Inequality~(\ref{estimate2}) follows from Lemma~\ref{lem:2} and the fact that $\Gamma[S_1 \cap S_2]$ must be $(k,l)$-sparse.
Inequality~(\ref{estimate3}) contradicts Lemma~\ref{lem:sparse1} when $l \geq 1$.

The case $l=0$ is more elaborate.
Since~$\Gamma$ is a planarizing gadget it must admit a planar embedding where the vertices $a,c,b,d$ appear on the outer-face in that order, as in Figure~\ref{fig:gadgetDrawing}.
Observe that in this embedding, the set $S_2$ naturally partitions the set $S_1$ into three parts $S_{a}, S_1 \cap S_2, S_{b}$ such that $a \in S_{a}, b \in S_{b}$.
Note that there cannot be any edges in the gadget with one end point in $S_{a}$ and other end point in $S_{b}$.
Vice versa, $S_1$ partitions $S_2$ into $S_{c}, S_1 \cap S_2, S_{d}$.
We give an upper bound on the number of edges in $E_{\Gamma}[S_1]$ that contradicts Lemma~\ref{lem:2}.

For $x \in \{a,b,c,d\}$ we define
\[
S_{x}' = S_{x} \cup (S_1 \cap S_2).
\]
and we let $z_x \geq 0$ be an integer such that
\begin{align}
|E_{\Gamma}[S_{x}']| = 2|S_{x}'| - z_x.
\end{align}
Note that we can assume that there are no edges going from $S_{a}$ to $S_{c}, S_{d}$ or from $S_{b}$ to $S_{c}, S_{d}$, 
otherwise~(\ref{estimate1}) would be larger and thus we get a contradiction.
For the same reason, we can assume $|E_{\Gamma}[S_1 \cap S_2]| = 2|S_1 \cap S_2|$ by~(\ref{estimate2}).
Thus, for the two integers $z_a, z_d \geq 0$, we have
\begin{align}
|E_{\Gamma}[S_{a}' \cup S_{d}']| &= |E_{\Gamma}[S_{a}']| + |E_{\Gamma}[S_{d}']| - |E_{\Gamma}[S_1 \cap S_2]| \\
&= |E_{\Gamma}[S_{a}']| + |E_{\Gamma}[S_{d}']| - 2|S_1 \cap S_2|\\
&= 2|S_{a}'| - z_a + 2|S_{d}'| - z_d - 2|S_1 \cap S_2|\\
&= 2|S_{a}' \cup S_{d}'| - (z_a + z_d).
\end{align}
By Lemma~\ref{lem:sparse2}, 
we also have $|E_{\Gamma}[S_{a}' \cup S_{d}']| \leq 2|S_{a}' \cup S_{d}'| - 4$. 
Thus, we get
\begin{align}
z_a + z_d \geq 4.
\end{align}
When we apply the same arguments to the pairs $(S_{a}',S_{c}'),~ (S_{b}',S_{d}'),~(S_{b}',S_{c}')$ we obtain the constraints
\begin{align}
z_a + z_c &\geq 4,\\
z_b + z_d &\geq 4,\\
z_b + z_c &\geq 4.
\end{align}
It follows that $z_a + z_b \geq 4$ or $z_c + z_d \geq 4$.
Assume $z_a + z_b \geq 4$.
Since there are no edges between $S_a,S_b$, we have 
\begin{align}
 |E_{\Gamma}[S_1]| &= |E_{\Gamma}[S_{a}']| + |E_{\Gamma}[S_{b}']| - |E_{\Gamma}[S_1 \cap S_2]|\\
&= 2|S_{a}'| - z_a + 2|S_{b}'| - z_b - 2|S_1 \cap S_2|\\
&=2|S_1| - (z_a + z_b)\\
&\leq 2|S_1| - 4
\end{align}

But from Lemma~\ref{lem:2}, we must have $|E_{\Gamma}[S_1]| \geq 2|S_1| - 3$, which gives a contradiction.
This finishes the proof of Theorem~\ref{thm:tightGadget}.

\section{Discussion}\label{s:discussion}

Theorem~\ref{thm:tightGadget} also holds for sparse graphs instead of tight graphs.
Lemma~\ref{lem:replaceEdge} can be slightly modified
so that edges can be replaced with a constant size graph while preserving sparsity.
When we modify the proof of Theorem~\ref{thm:tightGadget} for sparsity, the equality in Lemma~\ref{lem:gadgetSize} becomes an upper bound and the other arguments still work. 

\begin{theorem}\label{thm:sparseGadget}
Planarizing gadgets do not exist for
$(k,l)$-sparse graphs,
for $k \geq 2$ and $0 \leq l \leq 2k - 1$.
\end{theorem}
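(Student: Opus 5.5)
The plan is to rerun the proof of Theorem~\ref{thm:tightGadget} with ``$(k,l)$-tight'' replaced by ``$(k,l)$-sparse'' everywhere, and to check that each ingredient survives. First, the reduction to $k=2$, $l\in\{0,1,2,3\}$ needs a replacement argument. A planarizing gadget for sparsity has to keep the $(k,l)$-sparse graph $G$ consisting of many vertex-disjoint copies of a suitable tight graph, glued around a single crossing, sparse after the crossing is resolved. For $k\ge 4$, or $k=3$ and $l\le 5$, the gadget would then contain a dense planar part. More simply, one can take a $(k,l)$-tight graph $G$ with a crossing. Then $G'$ is sparse, and the counting in Lemma~\ref{lem:gadgetSize} gives $|E_\Gamma|\le k|V_\Gamma|-4k+2$. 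Combined with Lemma~\ref{lem:sparse1}-type reasoning, this shows the whole $G'$ is sparse with $kn'-l$ edges, hence tight and planar. That is impossible for these parameters once $G$ is large, since $kn'-l>3n'-6$.

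Next I would adapt Lemma~\ref{lem:replaceEdge} to sparsity: replacing an edge $uv$ by a $(k,2k-1)$-tight $\Omega$ preserves sparsity in both directions. The forward direction is literally the same case analysis. For the backward direction, a violating set $S\ni u,v$ in $G$ lifts to $S\cup V_\Omega$ in $G^*$ by the same computation. So we may again assume that no edges within $\{a,b,c,d\}$ lie in $\Gamma$.

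Lemmas~\ref{lem:sparse1} and~\ref{lem:sparse2} only use that $G$ is tight and that $G'$ is sparse, so they hold verbatim: feed in the same tight graphs $G$, whose images $G'$ must be sparse. Lemma~\ref{lem:gadgetSize} becomes an inequality and is not used later. Lemma~\ref{lem:2} uses a non-tight $G$ (which is in fact non-sparse, since $V-\{c,d\}$ violates sparsity) and tight $H_1,H_2$, so $G'$ is non-sparse and $H_i'$ are sparse. The argument producing $T$ and $S_1$, and the path argument via Lemma~\ref{lem:con} and Figure~\ref{fig:sparse2}, then go through unchanged.

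Finally, the case analysis of Section~\ref{s:contradiction} only combines Lemma~\ref{lem:2}, Lemma~\ref{lem:sparse1}, Lemma~\ref{lem:sparse2}, sparsity of $\Gamma$ and planarity, so it transfers directly. I expect the main obstacle to be the first step: tightness of $G'$ used edge counts only in Lemma~\ref{lem:gadgetSize}. So I must confirm that nowhere else is tightness of $G'$ (rather than sparsity) exploited, and that the trivial cases $k\ge3$ are still excluded. If the density argument for those cases is shaky, I would instead note that $\Gamma$ itself satisfies Lemma~\ref{lem:sparse1} with $S=V_\Gamma$. Gluing many disjoint gadgets then gives a planar graph that is sparse with too many edges.
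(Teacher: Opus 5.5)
Your treatment of the case $k=2$ follows the paper's route exactly. The paper's proof of Theorem~\ref{thm:sparseGadget} is only the remark that Lemma~\ref{lem:replaceEdge} adapts to sparsity, Lemma~\ref{lem:gadgetSize} becomes an upper bound, and the rest goes through. Your checks of this are correct:
\begin{itemize}
\item Lemmas~\ref{lem:sparse1} and~\ref{lem:sparse2} need only a tight input and a sparse output.
\item The inputs $G$ of Lemma~\ref{lem:2} are genuinely non-sparse, since $V-\{c,d\}$ violates the count.
\item Section~\ref{s:contradiction} uses nothing beyond these facts, sparsity of $\Gamma$ and planarity.
\end{itemize}
You are also right that the reduction to $k=2$ cannot be copied verbatim; the paper's sketch passes over this. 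However, the argument you give for $k\ge 3$ does not work.

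\textbf{The gap.} Every test you apply in that step feeds the gadget a sparse (indeed tight) input. Such tests can only bound the number of gadget edges from above. The sparse version of Lemma~\ref{lem:gadgetSize} gives $|E_\Gamma|\le k|V_\Gamma|-4k+2$, hence $|E'|\le kn'-l$. Lemma~\ref{lem:sparse1} is another upper bound and cannot supply the reverse inequality. So your claim that $G'$ has exactly $kn'-l$ edges, and is therefore tight, is unfounded. The same objection applies to your ``dense planar part'' claim and to the fallback: Lemma~\ref{lem:sparse1} with $S=V_\Gamma$ says $\Gamma$ has few edges, not too many, so gluing gadgets cannot produce ``too many edges''. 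Indeed, the gadget consisting of $a,b,c,d$ with no edges keeps every sparse graph sparse, so no argument using only sparse inputs can rule it out.

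\textbf{The missing idea.} Reverse the roles, as Lemma~\ref{lem:2} does, and use a non-sparse input. For $k\ge 3$ and $0\le l\le 2k-1$, every simple planar graph is $(k,l)$-sparse:
\begin{itemize}
\item for $|S|\ge 3$, $|E[S]|\le 3|S|-6\le k|S|-l$, because $l-6\le 2k-7\le 3(k-3)\le (k-3)|S|$;
\item for $|S|=2$, $|E[S]|\le 1\le 2k-l$.
\end{itemize}
Now take a non-sparse graph, e.g.\ $K_n$ with $n\ge 2k+2$, and replace its crossings one at a time by copies of $\Gamma$. Each step must preserve non-sparsity, yet the final graph is planar and hence sparse, which is a contradiction. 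This is the sparse counterpart of the paper's ``too many edges to be planar'' remark for tight graphs. It rests on the same tacit assumptions as that remark: simple graphs, so that the $3n-6$ bound applies, and replacement of all crossings.
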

Note that Theorem~\ref{thm:sparseGadget} is not a generalization of Theorem~\ref{thm:tightGadget}, the two statements are incomparable.
This is because for a $(k,l)$-tight graph, the graph $G'$ where we replaced a crossing with a gadget can become $(k,l)$-sparse in terms of Theorem~\ref{thm:sparseGadget}.

Our arguments do not work directly when we replace \emph{tight} with \emph{spanning} in Theorem~\ref{thm:tightGadget} since the size of the gadget can become arbitrarily large in this case. However, note that the situation is different in this case since it is not known whether we can recognize $(k,l)$-spanning graphs in~$\NC$ in the planar case.

Our result rules out a common approach to reduce the problem of deciding whether a given graph is $(k,l)$-tight or sparse to the planar case. However, this does not imply that such a reduction does not exist in general.
For example, we assume that the gadget should work for all possible drawings of a given graph.
One might still be able to compute a drawing of the graph in $\NC$ or $\Logspace$ such that a planarizing gadget is possible.
This situation arises analogously for bipartite perfect matching and Hamiltonian cycle~\cite{matchingGadget}. 
An example is illustrated in Figure~\ref{fig:diffDrawing}.

\begin{figure}[bth]
\centering
    \usetikzlibrary{positioning, fit, shapes.geometric}

\tikzstyle{every node}=[circle, draw, fill=black,inner sep=0pt, minimum width=4pt,
node distance =1 cm and 1cm ] 
\begin{tikzpicture}[scale=1]

\def\dx{4}

\begin{scope}[shift={(\dx,0)}]
    \node (1) at (0,0) []{}; 
    \node (2) at (1,1) [label=above left:$a$]{};
    \node (3) at (2,1) []{};
    \node (4) at (3,0) []{};
    \node (5) at (2,-1) [label=below right:$b$]{};
    \node (6) at (1,-1) []{};

      \node (7) at (1.25,-0.5) [label=below right:$d$]{}; 
    \node (8) at (1.75,0.5) [label=below right:$c$]{}; 

    \draw (1) -- (2);
    \draw (2) -- (3);
    \draw (3) -- (4);
    \draw (4) -- (5);
    \draw (5) -- (6);
    \draw (6) -- (1);
    \draw (2) -- (5);
    \draw (6) -- (3);

    \draw (1) to [out=90,in=90,looseness=1.8] (4);
    \draw (1) to [out=80,in=130,looseness=1.3] (3);
    \draw (1) to [out=280,in=230,looseness=1.3] (5);

    \node[below, fill=none, draw=none] at (1.5,-1.7) (a) {(a)};
\end{scope}

\begin{scope}[shift={(2*\dx,0)}]
    \node (1) at (0,0) []{}; 
    \node (2) at (1,1) [label=above left:$a$]{};
    \node (3) at (2,1) [label=above right:$c$]{};
    \node (4) at (3,0) []{};
    \node (5) at (2,-1) [label=below right:$b$]{};
    \node (6) at (1,-1) []{};

      \node (7) at (1.1,-0.8) []{}; 
    \node (8) at (1.35,-0.3) [label=below right:$d$]{}; 

    \draw (1) -- (2);
    \draw (2) -- (3);
    \draw (3) -- (4);
    \draw (4) -- (5);
    \draw (5) -- (6);
    \draw (6) -- (1);
    \draw (2) -- (5);
    \draw (6) -- (3);

    \draw (1) to [out=90,in=90,looseness=1.8] (4);
    \draw (1) to [out=80,in=130,looseness=1.3] (3);
    \draw (1) to [out=280,in=230,looseness=1.3] (5);

    \node[below, fill=none, draw=none] at (1.5,-1.7) (b) {(b)};
\end{scope}

 \end{tikzpicture}
        \caption{The same non-$(2,3)$-tight graph in two different drawings: (a) Adding a gadget by deleting the edges $ab, cd$ and connecting a new vertex $x$ to $a,b,c,d$ makes the graph $(2,3)$-tight. (b) the same gadget preserves the non-$(2,3)$-tightness.}
        \label{fig:diffDrawing}
\end{figure}

\section*{Acknowledgments}
We thank Meera Sitharam for pointing out the reduction from $(k,l)$-sparsity to max flow.

\bibliographystyle{alpha}
\bibliography{references}

\clearpage
\appendix
\appendix

\section{Reducing $(k,l)$-sparsity to max flow with multiple sources and sinks}\label{app:klsparse_to_flow}

We describe a reduction from $(k,l)$-sparsity to \emph{max flow with multiple sources and sinks (MSMS max flow)}.
There are similar reductions in the literature, see e.g.~\cite{HLS97, HLS98}.
Still the precise setting here needs some adaptions.

Max flow with multiple sources and sinks is defined as follows.
\begin{definition}[MSMS max flow]
  We are given a tuple $(G,S,T,c)$, where
  \begin{itemize}
   \item $G=(V,E)$ is a directed graph,
   \item $S,T \subseteq V$ are disjoint sets of
     \emph{sources} and \emph{sinks}, respectively.
     The vertices in $S$ have no incoming edges and the vertices in $T$ have no outgoing edges in $G$,
   \item $c:E\rightarrow \mathbb{Z}_{\geq 0}$ is the \emph{capacity function}.
  \end{itemize}
  A \emph{flow} in the graph is a function $f:E\rightarrow \mathbb{R}_{\geq 0}$. 
  For $v \in V$, the \emph{net outgoing flow} from~$v$ is
  denoted by~$f_v$,
  \[
  f_v = \sum\limits_{(v,w)\in E}f(v,w) -
  \sum\limits_{(w,v)\in E}f(w,v).
  \]
  A flow~$f$ is \emph{feasible}, if
  \begin{itemize}
    \item $\forall \, e\in E~~ f(e) \leq c(e)$,
    \item $\forall \, v \in V - (S\cup T) ~~ f_v = 0$.
  \end{itemize}
  Given a feasible flow~$f$, its \emph{value} is defined as
  $|f| = \sum_{s \in S} f_s$.
The objective of the problem is
  to find the maximum value of a feasable flow for~$(G,S,T,c)$. 
\end{definition}

We show  a reduction from sparseness to MSMS flows.

\begin{lemma}\label{claim:klSparse_to_flow}
Let $G = (V,E)$ be graph with $|V|=n$ and $|E|= m$,
and $0 \leq l \leq 2k - 1$. Deciding whether  $G$ is $(k,l)$-sparse reduces to solving $m$ many MSMS flow problems.

Moreover,
when $G$ is planar, has bounded genus or has bounded treewidth, these flow problems will have the corresponding property too.
\end{lemma}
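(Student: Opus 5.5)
We reduce $(k,l)$-sparsity to an \emph{orientation} question, in the spirit of Hendrickson's pebble-game characterization. For $l<k$ we use: $G$ is $(k,l)$-sparse iff for every edge $e\in E$, the multigraph obtained from $G$ by adding $l$ parallel copies of $e$ (rather, by \emph{charging} $l$ extra units to the endpoints of $e$) admits an orientation in which every vertex has indegree at most $k$. For $l\ge k$ the same works with the extra load placed on both endpoints of $e$. Concretely, for a fixed edge $e=(u,v)$ the condition $|E[S]|\le k|S|-l$ for all $S\ni u,v$ together with $|E[S]|\le k|S|$ for all other $S$ is equivalent to a Hall-type condition. This Hall-type condition is exactly the feasibility of a flow that assigns each edge to one of its endpoints, with vertex capacities $k$ reduced appropriately at $u,v$ so that the total reduction is $l$. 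Since $l\le 2k-1$, we can split $l$ as $l_u+l_v$ with $0\le l_u,l_v\le k$. For $S$ that avoid $\{u,v\}$, the stronger bound $k|S|-l$ follows by ranging over all edges $e$ inside $S$; sets $S$ with no edge at all satisfy the bound trivially when $|S|\ge 2$ and $l\le 2k$. This gives $m$ instances, one per edge.

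The flow network for the edge $e=(u,v)$ is built as follows. Each edge $f\in E$ becomes a source $s_f$ of supply capacity $1$. An internal node is placed at the midpoint of $f$, with arcs of capacity $1$ to both endpoints of $f$. Each vertex $w\in V$ gets a private sink $t_w$, reached by an arc of capacity $k$, except that the arcs from $u$ and $v$ have capacity $k-l_u$ and $k-l_v$. Note that the multiple sources and sinks are exactly what keeps the network local: $s_f$ is drawn next to the midpoint of $f$ and $t_w$ next to $w$. The network is therefore obtained from $G$ by subdividing edges and attaching pendant vertices, which preserves planarity and genus and raises treewidth by at most a constant. We then verify by max-flow/min-cut (or directly by Hall's theorem for the bipartite edge--vertex incidence structure) that the max flow equals $m$ iff every $S\subseteq V$ satisfies $|E[S]|\le k|S|-(l_u[u\in S]+l_v[v\in S])$.

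The key step, and the main obstacle, is the equivalence: $G$ is $(k,l)$-sparse iff all $m$ instances have flow value $m$. In one direction, a violating set $S$ with $|S|\ge2$ and $|E[S]|>k|S|-l$ contains an edge whenever $l<2k$ and $k|S|-l\ge 0$ for $|S|\ge 2$. Taking $e$ inside $S$, the instance for $e$ then has a cut of value less than $m$. In the other direction, suppose some instance, for $e=(u,v)$, fails. Then some set $S$ violates its capacity inequality. If $S$ contains both $u$ and $v$, this directly violates sparsity. If $S$ contains only one of them, or neither, then the constraint is weaker than $k|S|-l$ only by at most $l_u$ or $0$. Here I would enlarge $S$ by adding $u$ and $v$ and count the added edge $e$, together with the choice $l_u,l_v\le k$, to show that $S\cup\{u,v\}$ violates sparsity. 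This case analysis on which endpoints lie in $S$ is delicate, particularly for $l\ge k$ and for singleton sets, and is where I expect to spend the effort.
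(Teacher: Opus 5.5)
Your reduction is correct, but it differs from the paper's construction, although both networks have the same topology.

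\textbf{Comparison with the paper.} The paper implements literally the ``add $l$ parallel copies of $e$'' idea from your first paragraph, for all $0\le l\le 2k-1$ at once:
\begin{itemize}
\item all incidence arcs get the huge capacity $\alpha=m+l+1$, so a cut of value below $m+l$ must have the form $S\cup F\cup X$ with $X$ containing all endpoints of $F$;
\item every source has capacity $1$ except the source of $e$, which gets capacity $l+1$;
\item one asks whether the max flow equals $m+l$.
\end{itemize}
The cut analysis then splits only into $e\in F$ and $e\notin F$, and the flow itself decides how the $l$ extra units are shared between $u$ and $v$.

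You instead fix the split in advance by lowering the sink capacities at $u,v$ to $k-l_u$ and $k-l_v$. You keep unit sources and unit incidence arcs, and you test whether all $m$ sources can be saturated, which is a Hakimi-type orientation test.

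This is not ``exactly'' the add-$l$-copies Hall condition you state in your first paragraph. For sets containing only $u$, your network demands $|E[S]|\le k|S|-l_u$ rather than $|E[S]|\le k|S|$. That is a strictly stronger requirement, which is why you need $l_u,l_v\le k$; this is where $l\le 2k$ enters for you. The paper's version needs no split.

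Both networks are $G$ with subdivided edges plus pendant sources and sinks, so the planarity, genus and treewidth part is identical.

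\textbf{Correction to the step you flag as the main obstacle.} This step is immediate, and the enlargement you propose would not work in general. Passing from $S$ to $S\cup\{u,v\}$ raises the sparsity bound by $k$ or $2k$, while guaranteeing only one extra edge, namely $e$.

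No enlargement is needed. Since $l_u[u\in S]+l_v[v\in S]\le l$, any set violating $|E[S]|\le k|S|-l_u[u\in S]-l_v[v\in S]$ also violates $|E[S]|\le k|S|-l$. So if $|S|\ge 2$, the set $S$ itself witnesses non-sparsity. For $|S|\le 1$ (assuming no loops, as the paper also tacitly does), the constraint reads $0\le k-l_u$, $0\le k-l_v$, or $0\le k$. None of these can fail, because $l_u,l_v\le k$.

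Together with your correct argument for the other direction (a violating set with $|S|\ge 2$ contains an edge, since $k|S|-l\ge 1$, and you take $e$ to be that edge), this completes the equivalence.
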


\begin{proof}
We construct a MSMS flow problem $(H,S,T,c)$ as follows.
Graph $H = (V_H, E_H)$ has vertices
$V_H = S \cup E\cup V\cup T$,
where~$S$ is a copy of~$E$ and~$T$ is a copy of~$V$.
That is,
for $E = \{e_1,e_2,\dots, e_m\}$,
let $S = \{s_1,s_2,\dots, s_m\}$,
where $s_i \in S$ is associated with $e_i \in E$.
Similarly,
for $V = \{v_1,v_2,\dots, v_n\}$,
let $T = \{t_1,t_2,\dots, t_n\}$,
where $t_i \in T$ is associated with $v_i \in V$.

The edges of~$H$ are
\[
E_H = E_1 \cup E_2 \cup E_3,
\]
where
\begin{itemize}
\item $E_1$ are the $S$-$E$ edges,
$E_1 = \set{(s_i,e_i)}{i \in [m]}$
with capacities
$c(s_i,e_i) = 1$.
\item 
$E_2$ are the $E$-$V$ edges,
$E_2 = \set{(e,v)}{e \in E, ~v \in e}$
with capacities $c(e,v) = \alpha = m + l + 1$. 
The value of~$\alpha$ is chosen large enough
such that the $E_2$-edges can never appear in an $S$-$T$-cut of value at most~$m + l$.
\item 
$E_3$ are the $V$-$T$ edges,
$E_3 = \set{(v_i,t_i)}{i \in [n]}$
with capacities $c(v_i,t_i) = k$.
\end{itemize}

Observe that~$H$ is constructed from~$G$ by subdividing edges or adding vertices of degree one. Hence, the construction preserves the genus,
in particular planarity, and the treewidth of the input graph~$G$.

Flow problem $(H,S,T,c)$ serves as the base
from which we define~$m$ variants that we actually use in the reduction.
For each edge $e \in E$, 
we define essentially the same flow problem $(H,S,T,c_e)$,
only the capacity function is changed for one edge:
Let $s \in S$ be the copy of $e \in E$.
Then we define $c_e$ to be same as~$c$ except for edge~$(s,e)$ where we set
\[
c_e (s,e) = l+1.
\]
\ignore{
\[
c_{\hat{e}}(s_i,e_i) = 
\begin{cases}
 l+1,& \text{if } e_i = \hat{e},\\
 1,& \text{otherwise}.
\end{cases}
\]
}
The construction of $(H,S,T,c_e)$ is shown in Figure~\ref{fig:klsparse_to_flow}.
\begin{figure}[ht]
  \centering
  \usetikzlibrary{positioning, fit, shapes.geometric, arrows.meta, arrows.spaced}

\usetikzlibrary{decorations.markings}

\tikzstyle{every node}=[circle, draw, fill=black,inner sep=0pt, minimum width=4pt,
node distance =1 cm and 1cm ] 
\begin{tikzpicture}[scale=1, decoration={
    markings,
    mark=at position 0.5 with {\arrow{Latex[length=2mm,width=1.5mm]}}}]

    \node (s1) at (-1,0) [label=below left:$s_1$]{}; 
    \node (s2) at (-1,-1) [label=below left:$s_2$]{};
    \node (si) at (-1,-3) [label=below left:$s_i$]{};
    \node (sm) at (-1,-5) [label=below left:$s_m$]{};
    
    \node (e1) at (0,0) [label=below left:$e_1$]{}; 
    \node (e2) at (0,-1) [label=below left:$e_2$]{};
    \node (ei) at (0,-3) [label=below left:$e_i$]{};
    \node (em) at (0,-5) [label=below left:$e_m$]{};

    \draw[postaction={decorate}] (s1) -- node[fill=none, draw=none,above,sloped] {$1$} (e1);
    \draw[postaction={decorate}] (s2) -- node[fill=none, draw=none,above,sloped] {$1$} (e2);
    \draw[postaction={decorate}] (si) -- node[fill=none, draw=none,above,sloped] {$1+l$} (ei);
    \draw[postaction={decorate}] (sm) -- node[fill=none, draw=none,above,sloped] {$1$} (em);

    \draw[very thick,dotted] (0,-1.75) -- (0,-2.25);
    \draw[very thick,dotted] (0,-3.75) -- (0,-4.25);

    \node (v1) at (3,0) [label=below right:$v_1$]{}; 
    \node (v2) at (3,-2) [label=below right:$v_2$]{};
    \node (vn) at (3,-5) [label=below right:$v_n$]{};

    \node (t1) at (4,0) [label=below right:$t_1$]{}; 
    \node (t2) at (4,-2) [label=below right:$t_2$]{};
    \node (tn) at (4,-5) [label=below right:$t_n$]{};

    \draw[postaction={decorate}] (v1) -- node[fill=none, draw=none,above,sloped] {$k$} (t1);
    \draw[postaction={decorate}] (v2) -- node[fill=none, draw=none,above,sloped] {$k$} (t2);
    \draw[postaction={decorate}] (vn) -- node[fill=none, draw=none,above,sloped] {$k$} (tn);
    
    \draw[very thick,dotted] (3,-3) -- (3,-4);

    \draw[postaction={decorate}] (e1) -- node[fill=none, draw=none,above] {$\alpha$} (v1);
    \draw[postaction={decorate}] (e1) -- node[fill=none, draw=none,above] {~$\alpha$}(v2);
    \draw[postaction={decorate}] (e2) -- node[fill=none, draw=none,above] {$\alpha$}(v1);
    \draw[postaction={decorate}] (e2) -- node[fill=none, draw=none,right] {$\alpha$}(vn);
    \draw[postaction={decorate}] (em) -- node[fill=none, draw=none,above] {$\alpha$}(vn);
    \draw[postaction={decorate}] (ei) -- node[fill=none, draw=none,above] {$\alpha$}(v2);

     \draw[postaction={decorate}] (ei) -- node[fill=none, draw=none,above] {~$\alpha$}(1.5,-4);
     \draw[dotted] (1.5,-4) -- (1.9,-4.25);
     \draw[postaction={decorate}] (em) -- node[fill=none, draw=none,left] {$\alpha$~~~}(2,-3);
      \draw[dotted] (2,-3) -- (2.4,-2.6);

 \end{tikzpicture}
  \caption{The flow problem $(H,S,T,c_{e})$ as described in the reduction
  with $e_i = e$.
  }\label{fig:klsparse_to_flow} 
\end{figure}

We argue that $G$ is $(k,l)$-sparse iff the flow problems $(H,S,T,c_e)$ have max flow $m+l$, for all $e \in E$.
By the \emph{max-flow min-cut theorem}, 
we can argue via $(S,T)$-cuts in~$H$.
A set~$C \subseteq V_H$ is a $(S,T)$-\emph{cut},
if $S \subseteq C$ and $C \cap T = \emptyset$.
The \emph{capacity of cut}~$C$ for capacity function~$c_e$ is
\begin{equation}
 c_e (C) = \sum_{u \in C,\,v \in V_H\setminus C} c_e(u,v).   
\end{equation}
We will show that
\[
G \text{ not } (k,l)\text{-sparse} \iff
\exists e \in E~~ \exists C\subseteq V_H ~(S,T)\text{-cut}~~~  c_e(C) < m+l.
\]
\ignore{
$G$ is not $(k,l)$-sparse iff $\exists e \in E$  the flow graph $(H,S,T,c_e)$ admits an $(S,T)$-cut~$C$ in~$H$ of capacity strictly smaller then $m+l$, 
}

Assume first that $G$ is not $(k,l)$-sparse, i.e. there is a subset $X \subseteq V$ such that $|X| \geq 2$ and $|E_G[X]| > k|X| - l$.
Choose an edge~$e \in E[X]$ and consider the cut $C = S \cup E_G[X] \cup X \subseteq V_H$ in~$H$ for capacity function~$c_e$.
The cut edges for~$C$ are the edges from~$S$ to $E \setminus E_G[X]$ of capacity~$1$
and the edges from~$X$ to~$T$ of capacity~$k$. 
Hence, we have
\begin{align}
c_e(C) &=  m - |E_G[X]| + k|X|\\
&< k|X| + m - (k|X| - l)\\
&= m + l.
\end{align}

For the backward direction, 
assume that for an edge $e \in E$ there is a flow problem $(H,S,T,c_e)$ with a cut~$C$, where $c_e(C) < m + l$.
We list some properties of cut~$C$.
\begin{itemize}
\item 
$C \not= S$, because $c_e(S) = m+l$.
\item 
$C$ does not cut through edges from~$E_2$
going from~$E$ to~$V$ as they all have capacity $\alpha = m + l + 1$ and thus, the cut would be too large
already from one such edge.

\ignore{
\item 
$C \not= S \cup E \cup V$, because $c_e(S \cup E \cup V) = kn \geq m+l$, since $G$

Also, the cut $C$ cannot only cut through edges from $S$ to $E$ or from $V$ to $T$, otherwise its value would always be $m + l$.
}
\end{itemize}

Therefore we have $C = S \cup F \cup X$ for some non-empty sets $F \subseteq E$ and $X \subseteq V$, 
such that $X'= V_G(F) \subseteq X$,
where $V_G(F)$ are the endpoints of the edges in~$F$.
The latter point is because of the second item above.
Note that $|X'| \geq 2$ since $|F| \geq 1$.

\ignore{
i.e. $C$ cuts through the edges from $S$ to $E\setminus F$ and from $X$ to $T$.
We will show that the subset of vertices $X'$ in $G$ incident to edges in $F$,
\[
X' := \{v : v \in a \text{ for some } a \in F\},
\]
violates the sparsity condition in $G$.
Note that $|X'| \geq 2$ since $|F| \geq 1$.
Moreover, observe that $X' \subseteq X$, otherwise the cut $C$ would cut trough an edge from $E \setminus F$ to $V$ with capacity $\alpha = m + l + 1$ and thus the value of $C$ would be larger than $m + l$.
}

We have two cases depending on whether the edge $e$ is in $F$ or not.
\paragraph{Case 1.} $e \in F$. 
Then we have
\begin{align}
c_e(C) &= |E \setminus F| + k|X| \\
&= m - |F|  + k|X|.
\end{align}
Hence,
$c_e(C) < m+l$ is equivalent to $|F| > k|X| -l$.
It follows that~$G$ is not $(k,l)$-sparse
because
\begin{align}
|E_G[X']| \geq |F| > k|X| -l \geq k|X'| - l.
\end{align}
The last inequality follows  because $X' \subseteq X$.
\paragraph{Case 2.}
$e \not\in F$.
Then we have
$c_e(C) = m - |F| +l + k|X|$.
Similarly as in the first case, we conclude that
$|E_G[X']| > k|X'| \geq k|X'| - l$.
\end{proof}

When $G$ has bounded genus, MSMS max flow is also in $\NC$~\cite{PS18}.
Given a graph~$G$, the procedure of Sankowski~\cite{PS18} first reduces
the max flow problem to finding the maximum weight of an \ffact{} in a weighted 
graph $G'$. This in turn is reduced to the problem of finding the
maximum weight perfect matching in a weighted graph~$G''$. 
Both steps of the reduction preserve planarity and the weights involved are polynomially bounded. 
It is easy to see that they also preserve the genus of the graph as the
constructions of $G',G''$ involve replacing vertices with small
gadgets, and can be done for any surface embedded graph without
introducing edge crossings. 
Therefore, the problem reduces to that of finding the maximum weight of a
perfect matching in a graph of bounded genus. This can be done in $\NC$~\cite{Anari2020}.

Moreover, MSMS max flow can also be solved in $\NC$ for graphs with bounded treewidth using an algorithm by Hagerup, Katajainen, Nishimura and Ragde~\cite{HAGERUP1998366}.

\begin{corollary}
For $0 \leq l \leq 2k - 1$, deciding $(k,l)$-sparsity is in $\NC$ for graphs with bounded genus,
in particular planar graphs,
and for graphs with bounded treewidth.
\end{corollary}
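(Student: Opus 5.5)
The plan is to combine the reduction of Lemma~\ref{claim:klSparse_to_flow} with the known parallel algorithms for MSMS max flow on the relevant graph classes, and to check that the whole pipeline is itself computable in $\NC$.

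First, the reduction must be an $\NC$ (indeed $\Logspace$) computation. Given $G=(V,E)$ and the constants $k,l$, each flow instance $(H,S,T,c_e)$ is written down locally. We list the copies $S$ of $E$ and $T$ of $V$, add the subdivision vertices for the edges, and assign capacities $1$, $l+1$, $\alpha=m+l+1$ and $k$. All capacities are integers bounded by a polynomial in $m$. The $m$ instances, one per $e\in E$, are independent, so they can be produced and solved in parallel. Lemma~\ref{claim:klSparse_to_flow} also guarantees that each $H$ inherits planarity, bounded genus, or bounded treewidth from $G$. This holds because $H$ arises from $G$ only by subdividing edges and attaching pendant vertices.

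Second, each instance is solved in $\NC$.
\begin{itemize}
\item For bounded genus (in particular planar) $H$, we follow the chain described above. Sankowski's reduction~\cite{PS18} turns MSMS max flow into maximum-weight $f$-factor and then into maximum-weight perfect matching. It preserves the genus and keeps the weights polynomially bounded. The resulting perfect matching problem is then solved in $\NC$ by~\cite{Anari2020}.
\item For bounded treewidth, we invoke the $\NC$ max-flow algorithm of~\cite{HAGERUP1998366} directly.
\end{itemize}
In both cases we only need the value of a maximum flow, which is all that the decision in Lemma~\ref{claim:klSparse_to_flow} requires.

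Third, we combine the answers. The cut $C=S$ has capacity $c_e(S)=m+l$, so every max-flow value is at most $m+l$. By the lemma, $G$ is $(k,l)$-sparse iff every instance attains exactly $m+l$. This final test is an AND of $m$ comparisons, which takes $O(\log m)$ depth. Composing $\NC$ computations of polynomial size yields an $\NC$ algorithm overall.

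The main obstacle is the second step for bounded genus. We must make sure the cited results really apply in our setting: multiple sources and sinks, integral polynomially bounded capacities, and the genus (not merely planarity) preserved through both of Sankowski's gadget replacements. The natural single-source/single-sink reduction, which attaches a super-source to all of $S$ and a super-sink to all of $T$, would destroy planarity. So the argument relies on the MSMS formulation being handled natively by the $f$-factor reduction. I would verify this by checking that each vertex gadget in~\cite{PS18} is a local replacement that can be embedded in the same surface, and that the weight blow-up stays polynomial, so that~\cite{Anari2020} applies.
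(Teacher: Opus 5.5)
Your proposal is correct and follows essentially the same route as the paper. You use Lemma~\ref{claim:klSparse_to_flow} to obtain $m$ MSMS max-flow instances that inherit planarity, genus, or treewidth. In the bounded-genus case you solve them via Sankowski's genus-preserving chain through $f$-factors to weighted perfect matching and~\cite{Anari2020}, and in the bounded-treewidth case via~\cite{HAGERUP1998366}. The extra bookkeeping you add (the reduction is computable in logspace, $c_e(S)=m+l$ caps every max-flow value, and the final parallel AND) makes explicit what the paper leaves implicit.
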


\end{document}